\documentclass[12pt,english]{article}
\usepackage[a4paper, portrait,
	left=0.75in,
	right=0.75in,
	top=1in,
	bottom=0.5in,
	footskip=.25in]{geometry}

\usepackage{mathrsfs}
\usepackage{amsfonts}
\usepackage{amsmath} 
\usepackage{amssymb} 
\usepackage{amsthm}

\usepackage{mathrsfs}
\usepackage{physics}

\usepackage{graphicx}
\usepackage{array}
\usepackage{xspace}
\usepackage{natbib}

\usepackage{orcidlink}
\usepackage{authblk}

\usepackage{url}
\usepackage{hyperref}
\usepackage{color}
\definecolor{refcolor}{RGB}{160,35,0}
\definecolor{hrefcolor}{RGB}{0,35,190}
\hypersetup{
    colorlinks,
    citecolor=refcolor,
    filecolor=refcolor,
    linkcolor=hrefcolor,
    urlcolor=hrefcolor
}

\newtheorem{theorem}{Theorem}
\newtheorem{question}{Question}
\newtheorem{insight}{Insight}

\theoremstyle{remark}
\theoremstyle{definition}

\newtheorem{proposition}{Proposition}
\newtheorem{definition}{Definition}

\newtheorem{remark}{Remark}

\theoremstyle{definition}
\newtheorem{note}{Note}

\newtheoremstyle{nospace} 
  {10pt}                   
  {10pt}                   
  {\itshape}              
  {}                      
  {\bfseries}             
  {.}                     
  {.5em}                  
  {\thmname{#1}\thmnumber{#2}\thmnote{ (#3)}} 

\theoremstyle{nospace}
\newtheorem{QT}{QT}
\newcommand{\qtref}[1]{{\normalfont{QT\ref{#1}}}}

\newcommand{\HSF}{\ref{def:HSF}\xspace}
\newcommand{\thmStoica}{\ref{thm:theoremtwo}\xspace}

\newtheorem{defCustom}{}
\makeatletter
\newcommand{\setdefCustomtag}[1]{
  \let\oldthedefCustom\thedefCustom
  \renewcommand{\thedefCustom}{{\normalfont\textbf{#1}}}
  \g@addto@macro\enddefCustom{
    \global\let\thedefCustom\oldthedefCustom}
  }
\makeatother

\newcommand{\orcid}[1]{\href{https://orcid.org/#1}{\textcolor[HTML]{A6CE39}{\aiOrcid}}}

\def\({\left(}
\def\){\right)}
\def\[{\left[}
\def\]{\right]}

\newcommand{\hilbert}{\mathcal{H}}
\newcommand{\mc}[1]{\mathcal{#1}}
\newcommand{\ms}[1]{\mathscr{#1}}

\newcommand{\wt}[1]{\widetilde{#1}}

\newcommand{\pobs}[1]{#1}
\newcommand{\obs}[1]{\mathsf{\pobs{#1}}}
\newcommand{\oper}[1]{\obs{#1}}

\newcommand{\R}{\mathbb{R}}
\newcommand{\C}{\mathbb{C}}

\newcommand{\ie}{\textit{i.e.}\xspace}
\newcommand{\vs}{\textit{vs.}\xspace}
\newcommand{\eg}{\textit{eg.}\xspace}

\newcommand{\etc}{\textit{etc}\xspace}

\newcommand{\Stab}{\textnormal{Stab}}

\newcommand{\U}{\textnormal{U}}

\newcommand{\SU}{\textnormal{SU}}

\newcommand{\x}{\mathbf{x}}

\newcommand{\q}{\boldsymbol{q}}
\newcommand{\p}{\boldsymbol{p}}

\def\sref #1{\S\ref{#1}}

\author{Cristi Stoica\ \orcidlink{0000-0002-2765-1562}}

\affil{\small Dept. of Theoretical Physics, NIPNE-HH, Bucharest, Romania.\\\href{mailto:cristi.stoica@theory.nipne.ro}{cristi.stoica@theory.nipne.ro},  \href{mailto:holotronix@gmail.com}{holotronix@gmail.com}}

\title{Comment on ``On the emergence of preferred structures in quantum theory'' by Soulas, Franzmann, and Di Biagio}

\date{\small\today} 

\begin{document}

\pagestyle{headings}	
\newpage
\setcounter{page}{1}
\renewcommand{\thepage}{\arabic{page}}

\maketitle

\begin{abstract}
This reply is also a friendly introduction to the impossibility of emergence of preferred structures from the Hamiltonian $\mathsf{H}$ and the unit vector $|\psi\rangle$ only. The obstructions to emergence are illustrated on the concrete construction of a tensor product structure (TPS) from \citealp{SoulasFranzmannDiBiagio2025OnTheEmergenceOfPreferredStructuresInQuantumTheoryV2} (arXiv:2512.07468v2). Soulas et al. offer their TPS as a counterexample to the proof from \citealp{Stoica2022SpaceThePreferredBasisCannotUniquelyEmergeFromTheQuantumStructure} (arXiv:2102.08620) that structures constructed only from $\mathsf{H}$ and $|\psi\rangle$ either contradict physical observations or can't describe them unambiguously.

Soulas et al.'s construction of a unique TPS can't be both invariant and compatible with physical observations, so it can't be a counterexample. Its incompatibility becomes visible by examining how the relation between $|\psi(t)\rangle$ and the TPS, encoding the entanglement, changes in time. Therefore their TPS doesn't refute, but confirms \citep{Stoica2022SpaceThePreferredBasisCannotUniquelyEmergeFromTheQuantumStructure}.

Besides this, since Soulas et al.'s method to construct preferred structures consists of choosing their invariants, by the same logic one could claim as well that the masses of elementary particles emerge uniquely just by fixing their values by hand.

Soulas et al.'s construction is concrete and can illustrate the major obstructions for emergent structures, confirming them despite doing the best possible to avoid them. This makes it an excellent pedagogical tool to illustrate the trilemma, but also the relational and structural aspects of quantum theory and its symmetries.
\end{abstract}

\paragraph{Keywords:} tensor product structure, foundations of quantum theory, locality, emergent space \& spacetime, symmetry, group theory, Hilbert space fundamentalism, quantum mereology.

\tableofcontents

\section{Introduction}
\label{s:intro}

This is a reply to \citep[arXiv:2512.07468v2]{SoulasFranzmannDiBiagio2025OnTheEmergenceOfPreferredStructuresInQuantumTheoryV2}, but also a friendly introduction to the various obstructions for the emergence of preferred structures from the Hamiltonian $\mathsf{H}$ and the unit vector $|\psi\rangle$ only, a hot research topic these days, spreading across multiple research programs, from foundations of quantum theory to quantum gravity.
I think the concrete construction of an emergent tensor product structure (TPS) proposed by Soulas et al. is excellent as an illustration of these obstructions.

Soulas et al. constructed a TPS and proposed it as a counterexample to the no-go results from \citep{Stoica2022SpaceThePreferredBasisCannotUniquelyEmergeFromTheQuantumStructure,Stoica2024DoesTheHamiltonianDetermineTheTPSAndThe3dSpace} against Hilbert space fundamentalism (\HSF).
\HSF is the thesis that a complete description of the world, including space, fields or particles, and the TPS giving the decomposition into subsystems, emerge uniquely from $\mathsf{H}$ and $|\psi\rangle$ only. In \citep{Stoica2022SpaceThePreferredBasisCannotUniquelyEmergeFromTheQuantumStructure} I showed that such constructions either can't describe physical changes in the world, or can't describe unambiguously the physical reality they're supposed to describe.

To select a physically unique TPS, one has to fix a number of parameters exponential in the number of subsystems, as proved in \citep{Stoica2024DoesTheHamiltonianDetermineTheTPSAndThe3dSpace}. Soulas et al.'s construction does this by fixing a continuum infinity of interdependent parameters representing entanglement invariants.

The central motivation of this article is to explain the trilemma faced by Soulas et al.'s construction, when confronted with time evolution:
\begin{enumerate}
	\item 
either its infinitely many invariants depend of time in a strictly choreographed way,
	\item 
or they have to be fixed at an absolute moment of time once and for all,
	\item 
or the solution contradicts the empirical observation that entanglement can change.
\end{enumerate}

The first two options violate the invariance of the construction and are equivalent to fixing an absolute TPS, as normally done in quantum theory, but by more complicated means that may give the impression of emergence.

Therefore, their construction gives a nice confirmation of the very no-go results from \citep{Stoica2022SpaceThePreferredBasisCannotUniquelyEmergeFromTheQuantumStructure,Stoica2024DoesTheHamiltonianDetermineTheTPSAndThe3dSpace} that it was hoped to refute, and provides an excellent illustration of the trilemma.

In addition to this, if we could claim that fixing the invariants that classify the solutions counts as unique emergence, we could equally claim that the masses and spins of the elementary particles from the standard model emerge uniquely by manually fixing the Casimir invariants of the irreducible representations of the Poincar\'e group, claim that nobody makes, for the obvious reason that these Casimir invariants are exactly the mass and spin.

Soulas et al. claim to identify errors and limitations in \citep{Stoica2022SpaceThePreferredBasisCannotUniquelyEmergeFromTheQuantumStructure}, to correct them, and to generalize the framework. A comparison with the original source shows that the alleged errors and limitations are not present in it.

Therefore, \HSF is still untenable. However, I explore the possibility that a weaker version of \HSF can survive by dramatically reducing its scope from a theory of everything to limited relational aspects of the physical laws. I illustrate this possibility by comparing what was really achieved about locality in \citep{CotlerEtAl2019LocalityFromSpectrum} with what was believed to be achieved.

I also include an analysis of the ``relational philosophy'' of quantum theory, brought by Soulas et al. as an argument for their construction. We will see whether physical evidence supports the relational perspective from \citep{SoulasFranzmannDiBiagio2025OnTheEmergenceOfPreferredStructuresInQuantumTheoryV2} or the one from \citep{Stoica2022SpaceThePreferredBasisCannotUniquelyEmergeFromTheQuantumStructure}.

\section{Hilbert space fundamentalism and emergent structures}
\label{s:HSF}

The Hilbert space fundamentalism program can be understood in several ways,
\begin{enumerate}
	\item 
As a theory of quantum gravity \citep{Piazza2010GlimmersOfAPreGeometricPerspective,Giddings2019QuantumFirstGravity} or even as a theory of everything  \citep{CarrollSingh2019MadDogEverettianism,Carroll2021RealityAsAVectorInHilbertSpace}, aiming to derive all facts about the world only from the Hamiltonian and the state vector.
	\item 
As a reconstruction of quantum theory, a derivation of some of the principles of quantum theory from others, similar to how mathematicians tried for centuries to deduce the parallel postulate from the other postulates of Euclidean geometry.
	\item 
As a framework to solve specific problems like the emergence of a pointer basis in decoherence \citep{JoosZeh1985EmergenceOfClassicalPropertiesThroughInteractionWithTheEnvironment,Zurek1998DecoherenceEinselectionAndTheExistentialInterpretation,Zurek2003DecoherenceEinselectionAndTheQuantumOriginsOfTheClassical,LombardiEtAl2012TheProblemOfIdentifyingSystemEnvironmentDecoherence}, emergence of space or spacetime in quantum gravity \citep{CarrollSingh2019MadDogEverettianism,Carroll2021RealityAsAVectorInHilbertSpace} \etc.
	\item 
As a natural consequence of
\begin{enumerate}
	\item 
 the focus of modern physics on the mathematical structures, in particular on dualities, multiple structures realizing the same physics, 
	\item 
a relational philosophy and views informed by quantum information theory,
	\item 
structural realism, which finds unnecessary and even misguided to talk about physical meaning beyond structures and mathematical relations,
	\item 
operationalism, instrumentalism.
\end{enumerate}
	\item 
	All of the above.
\end{enumerate}

\subsection{HSF as a project to simplify the foundations of quantum theory}
\label{s:intro:std-qt}

The standard specification of a quantum theory consists of the following
\begin{QT}
\label{QT:HSF}
The total state of the world is represented by a unit vector $\ket{\psi(t)}\in\hilbert$ from a Hilbert space $\hilbert$. Its time evolution is determined by a Hermitian operator $\oper{H}$ on $\hilbert$ (the Hamiltonian),
\begin{equation}
\label{eq:time-evol}
\ket{\psi(t)}=e^{-\frac{i}{\hbar}\oper{H}t}\ket{\psi(0)}.
\end{equation}
\end{QT}

\begin{QT}
\label{QT:TPS}
A decomposition of $\hilbert$ as a tensor product
\begin{equation}
\label{eq:hilbert-tensor}
\hilbert \cong \hilbert_1\otimes\hilbert_2\otimes\ldots,
\end{equation}
which describes the structure of the system in terms of subsystems or regions of space.
\end{QT}
\begin{QT}
\label{QT:OBS}
An association of physically observable properties to Hermitian operators on $\hilbert$ or on the subsystem spaces $\hilbert_j$. These are of the form

\begin{center}
\begin{tabular}{p{5cm} l l}
  position~of~particle~$j$ & $\mapsto$ & $(\widehat{x}_j,\widehat{y}_j,\widehat{z}_j)$ \\
  momentum~of~particle~$j$ & $\mapsto$ & $(\widehat{p}_{x_j},\widehat{p}_{y_j}, \widehat{p}_{z_j})$ \\
  $\ldots$ &  & $\ldots$ \\
\end{tabular}
\end{center}
or similar prescriptions in the case of quantum field operators.
\end{QT}

\begin{QT}
\label{QT:PROJ}
When measuring an observable $\oper{A}=\sum_a a \dyad{a}$ we obtain an eigenvalue $a$ of $\oper{A}$ with probability $\abs{\braket{a}{\psi}}^2$; then the system (or our description of the system) updates from $\ket{\psi}$ to $\ket{a}$.
\end{QT}

There is much more to say about this, but I try to ignore the details that, while essential for other research questions, are irrelevant for this discussion.

The Hilbert space as in \qtref{QT:HSF} already comes equipped with all Hermitian and unitary operators, just like the Euclidean space comes equipped with all isometries.
The role of \qtref{QT:OBS} is not to add these operators to $\hilbert$, since they're already there, but to prescribe their physical meaning.
Which operators represent the position of each particle, which operators represent their momenta and so on.
For example, the position operators $\widehat{\x}$ turn the unit vector $\ket{\psi(t)}$, which by itself is like any other unit vector from the Hilbert space, into something full of structure and physical meaning, the \emph{wavefunction}:
\begin{equation}
\label{eq:wavefunction}
\psi(\x,t)=\braket{\x}{\psi(t)}.
\end{equation}

Without this, it would be meaningless to talk about position observables or position basis or configuration space of positions, and therefore of a  wavefunction. Even in quantum field theory, which comes equipped with a 3D space, we need to know the domains in space and the physical meanings of the local operators, and in some formulations we need to know the configuration space of classical fields \citep{Hatfield2018QuantumFieldTheoryOfPointParticlesAndStrings}.

\qtref{QT:OBS} includes, among its association of physical meanings to observables, the observables generating \emph{local algebras of observables}, \ie what observables correspond to each subsystem.
This implies that \qtref{QT:OBS} determines the tensor product decomposition~\eqref{eq:hilbert-tensor} \citep{ZanardiLidarLloyd2004QuantumTensorProductStructuresAreObservableInduced}.
In other words, if \qtref{QT:OBS} is sufficiently specified, \qtref{QT:TPS} is not an independent axiom, and the tensor product decomposition emerges from \qtref{QT:OBS}.

In addition, the many-worlds interpretation proposes that \qtref{QT:PROJ} also emerges from \qtref{QT:HSF}, \qtref{QT:TPS}, and \qtref{QT:OBS} \citep{Everett1957RelativeStateFormulationOfQuantumMechanics,Wallace2012TheEmergentMultiverseQuantumTheoryEverettInterpretation,SEP-Vaidman2021MWI}.
In other words, the measurements, including the probabilities, and the emergence of the classically looking macroscopic level of reality, are consequences of unitary evolution due to decoherence. The emergence of probabilities and their meaning in the many-worlds interpretation are disputed, and there are proposals to address them, but this may need that some observables have a preferred ontological status while everything remaining consistent with \qtref{QT:HSF}--\qtref{QT:OBS} \citep{Stoica2023TheRelationWavefunction3DSpaceMWILocalBeablesProbabilities,Stoica2024ClassicalManyWorldsInterpretation}.
But the question whether the postulate \qtref{QT:PROJ} is redundant or not is not the subject of this article.

Here we discuss only the possibility to reconstruct \qtref{QT:TPS} and \qtref{QT:OBS} just from \qtref{QT:HSF}.

\subsection{Motivation for Hilbert space fundamentalism}
\label{s:intro:HSF}

Are postulates \qtref{QT:TPS} and \qtref{QT:OBS} redundant, or can they be reduced to \qtref{QT:HSF}? Is it necessary to postulate associated physical meanings to the operators, as in \qtref{QT:OBS}? It makes perfect sense to think that this meaning should follow straight from the spectral properties of the operators and from the relations of these operators with each other and with the Hamiltonian.

Shouldn't everything there is to know about the world be encoded in the relations between the measured properties and the objects used as measurement units with which we compare them, in the data resulting from experiments (the ``clicks of the detectors''), and in the correlations between these data?
After all, everything that we can access in the world we decode from this raw data, and based on the correlations between the raw data we reconstruct other structures like space, and we identify subsystems and their algebras of operators, recovering \qtref{QT:TPS} and all structures in the world.
We always did this even before the discovery of quantum theory, even before the invention of language, as mere biological beings: we were somehow able to turn the sensory input into virtual internal worlds, in our attempts to depict to ourselves the external world and its place in it.

Isn't everything that we can access just information?
Isn't everything describable mathematically, as sets and relations \citep{Hodges1997ShorterModelTheory,Gratzer2008UniversalAlgebra,Stoica2016AndTheMathWillSetYouFreeSpringer}? Or maybe they even \emph{are} just pure mathematical structures \citep{Tegmark2014OurMathematicalUniverse}.
Maybe to us, humans, it feels different, it feels that the field of experience is constituted of a wide range of colorful experiences to which we can't stay neutral, but what else do we experience, if not quantum information events, and from these bits we infer everything there is \citep{Wheeler1990InformationPhysicsQuantumLinks,FQXI2015ItFromBitOrBitFromIt}?

This is why to many it seems more than plausible that the only fundamental postulate needed by a quantum theory is \qtref{QT:HSF} (with or without \qtref{QT:PROJ}), and from \qtref{QT:HSF} alone we can recover both \qtref{QT:TPS} and \qtref{QT:OBS},
\begin{equation}
\label{eq:HSF-QT}
\text{\qtref{QT:HSF}} \Rightarrow \text{\qtref{QT:TPS}$+$\qtref{QT:OBS}}.
\end{equation}

It seems that we are unavoidably driven by logic itself to the minimalist ontology called \emph{Hilbert space fundamentalism} (\HSF), described in \citep{CarrollSingh2019MadDogEverettianism} as
\begin{quote}
The simplest quantum ontology is {\color{gray}that of the Everett
or Many-Worlds interpretation,} based on a vector in Hilbert space and a Hamiltonian.
Typically one also relies on some classical structure, such as space and local configuration variables within it, which then gets promoted to an algebra of preferred observables. We argue that even such an algebra is unnecessary, and the most basic description of the world is given by the spectrum of the Hamiltonian (a list of energy eigenvalues) and the components of some particular vector in Hilbert space. Everything else --- including space and fields propagating on it --- is emergent from these minimal elements.
\end{quote}

\begin{note}
\label{note:MWI}
I grayed out the mention of Everett's theory in this quote, because there are supporters of \HSF who don't adhere to this interpretation or are agnostic about the status of \qtref{QT:PROJ}.
Carroll \& Singh define HSF as \qtref{QT:HSF}$\Rightarrow$\qtref{QT:TPS}$+$\qtref{QT:OBS}$+$\qtref{QT:PROJ}, but here we'll discuss the more general form~\eqref{eq:HSF-QT}.

In the present article, ``complete description of the reality'' doesn't have the same meaning as in \citep{EPR35}, except for those who support both Everett's theory and \HSF. I'm not discussing interpretations of quantum mechanics here, I prefer to keep \qtref{QT:PROJ} independent, for generality and because it requires a whole separate analysis.
\end{note}

Briefly, \HSF is the extremely Zen \citep{Aaronson2021TheZenAntiInterpretationOfQuantumMechanics} thesis that all we need to describe the world are the Hilbert space $\hilbert$, the Hamiltonian $\oper{H}$, and the unit vector $\ket{\psi}$.

All these without pre-assuming the wavefunction $\braket{\x}{\psi(t)}$ or the matrix form of the Hamiltonian $\mel{\x}{\oper{H}}{\x'}$ or a functional form like $\oper{H}\(\widehat{\x},\widehat{\p}\)$. These shalt not be used until preferred observables like $\widehat{\x}$ and $\widehat{\p}$ emerge from structure alone.
And if we don't get these observables in the exact same way as in standard quantum theory, maybe it's even better, we may get instead quantum gravity \citep{CarrollSingh2019MadDogEverettianism}.

Here is the statement of \HSF that I will use in this article:
\setdefCustomtag{HSF}
\begin{defCustom}[Hilbert space fundamentalism]
\label{def:HSF}
Two triples $\(\hilbert,\oper{H},\ket{\psi}\)$ and $\(\hilbert',\oper{H'},\ket{\psi'}\)$ depict one and the same physical reality if and only if they are isomorphic, in the sense that there is a unitary map between Hilbert spaces $\oper{U}:\hilbert\to\hilbert'$ so that
\begin{equation}
\label{eq:HSF-iso}
\begin{cases}
\oper{H'}&=\oper{U}\oper{H}\oper{U}^{-1} \\
\ket{\psi'}&=\oper{U}\ket{\psi}. \\
\end{cases}
\end{equation}
\end{defCustom}

The main necessary steps in a program to eliminate \qtref{QT:TPS} \& \qtref{QT:OBS} as redundant are
\begin{enumerate}
	\item 
Obtaining a preferred TPS. This was thought to be achieved, in finite dimension, in \citep{CotlerEtAl2019LocalityFromSpectrum}, but that result was not in the sense needed for \HSF.
	\item 
Obtaining the canonically conjugate pairs of observables for the resulting subsystems. For their finite dimensional version see \citep{SinghCarroll2018ModelingPositionAndMomentumInFiniteDimensionalHilbertSpacesViaGeneralizedPauliOperators}.
\end{enumerate}

\subsection{Incompleteness of Hilbert space fundamentalism}
\label{s:intro:HSF-no}

After I presented the quite natural arguments that seem to lead unavoidably to \HSF, let me add that we already have all the elements to refute it as able to give a complete description of reality, as a theory of everything, or even as capable to describe physical change.

Consider automorphisms $\oper{U}_t$ of $\(\hilbert,\oper{H},\ket{\psi}\)$, where $\oper{U}_t=e^{-\frac{i}{\hbar}\oper{H}t}$. Then, equation~\eqref{eq:HSF-iso} becomes
\begin{equation}
\label{eq:HSF-auto-time}
\begin{cases}
\oper{H}&=\oper{U}_t\oper{H}\oper{U}_t^\dagger \\
\ket{\psi(t)}&=\oper{U}_t\ket{\psi(0)}. \\
\end{cases}
\end{equation}

We see that the triple $\(\hilbert,\oper{H},\ket{\psi(t)}\)$ is isomorphic with the triple $\(\hilbert,\oper{H},\ket{\psi(0)}\)$ \citep{Stoica2026NoChangeInHilbertSpaceFundamentalism}.
But $\(\hilbert,\oper{H},\ket{\psi(t)}\)$ can be understood both as equivalent with $\(\hilbert,\oper{H},\ket{\psi(0)}\)$, and as its time evolved state.
It follows straight from \HSF that its description of the world at the time $0$ is the same as that at any other time $t$!

This is quite an ambiguity, how can the world as it is now be isomorphic with the world as it was during Genghis Khan or the dinosaurs or even the big bang, and also with the world as it will be any time in the future?

How completely can $\(\hilbert,\oper{H},\ket{\psi}\)$ describe the world, even if supplemented with various constructions of emergent structures determined by this triple? Wouldn't any such construction, if successful, give identical descriptions for the world at time $t=0$ as for the world at any other time? So \HSF can't even describe changes.

And the ambiguity in the description of the world at different times is just an example, the most evident ambiguity of \HSF, based on the fact that the generator of $\oper{U}_t=e^{-\frac{i}{\hbar}\oper{H}t}$ commutes with $\oper{H}$. But there are infinitely many such generators. For an $N$-dimensional Hilbert space, the possible automorphisms form a space whose dimension is between $N$ and $N^2$.

A too quick reply may be ``c'mon, the wavefunction describes everything about the world!''. But we don't have the wavefunction in \HSF, remember? The wavefunction $\psi(\x,t)$ can extracted from $\ket{\psi(t)}$ using $\psi(\x,t)=\braket{\x}{\psi(t)}$, as in equation~\eqref{eq:wavefunction}. But this requires to know which are the position observables, and \HSF threw away this information when it abandoned \qtref{QT:OBS}.

All we have in \HSF is a unit vector, which is like any other unit vector in $\hilbert$, and the Hamiltonian $\oper{H}$, which is the same like any other operator $\oper{U}\oper{H}\oper{U}^\dagger$.

Another quick reply would be ``the physics is the same, we have all we need, the rest of the details can be obtained by applying the observables to the state vector $\ket{\psi(t)}$''. But, again, in \HSF, while it's true that along with $\hilbert$ we get for free all Hermitian and unitary operators on $\hilbert$, we don't know which of them correspond to the physical properties they're supposed to describe.
Remember that we got rid of \qtref{QT:OBS}.

``So what?'', one may insist, ``there can't be many operators with the same properties like the position operators!''. But there are. For each $\widehat{x}_{k,j}$, where $j\in\{1,2,3\}$ index the space coordinates and $k$ the particle, there are infinitely many other operators with the same spectrum, all of the form $\oper{S}\widehat{x}_{k,j}\oper{S}^\dagger$.
They even stand in the same relation with $\oper{H}$ if $\oper{S}$ commute with $\oper{H}$.
Such an operator $\oper{S}$ turns all of them into other operators, while maintaining the exact same relations between them, and leaving unchanged the form of the Hamiltonian in terms of position and momentum operators \citep[\S2.2]{Stoica2022SpaceThePreferredBasisCannotUniquelyEmergeFromTheQuantumStructure}.
Just like in Euclidean geometry an isometry turns any triangle into a congruent triangle.

``OK, gotcha!'' one may reply ``Then they \emph{do} describe the same physics, because your $\ket{\psi}$ turns together with $\ket{\x}$, so $\psi(\x,t)$ remains unchanged, because it is just $\braket{\x}{\psi(t)}$!''.
But this assumes that we already fixed the position operators, while in \HSF this relation isn't there to begin with.
With infinitely many equivalent choices available, there is no way to fix it without a choice that amounts to what \qtref{QT:OBS} does, and this is what \HSF gives up in the first place.

And if we think a unique choice emerges, since the triples $\(\hilbert,\oper{H},\ket{\psi(t)}\)$ and $\(\hilbert,\oper{H},\ket{\psi(0)}\)$ are equivalent, we should get the same emergent structures at all times.
Either we get a unique one, which can't be used to describe changing things, or many of them, and we have to fix one.

So if the triples from equation~\eqref{eq:HSF-iso} describe the same world, equation~\eqref{eq:HSF-auto-time} implies that \HSF's description of the world doesn't change in time.
\HSF can't have it both ways.

That's why the founders of quantum mechanics had to include \qtref{QT:TPS} and \qtref{QT:OBS} along with the triple $\(\hilbert,\oper{H},\ket{\psi}\)$ \citep{vonNeumann1955MathFoundationsQM,Dirac1958ThePrinciplesOfQuantumMechanics}. Without them, there is no complete description of reality, and in fact, no description of change whatsoever.

I think this refutation of \HSF is sufficient.
There are so many completely different ways to complete the triple $\(\hilbert,\oper{H},\ket{\psi}\)$ with the missing data, that you won't only get different worlds following the same laws, but even different laws \citep{Stoica2022SpaceThePreferredBasisCannotUniquelyEmergeFromTheQuantumStructure,Stoica2024EmpiricalAdequacyOfTimeOperatorCC2HamiltonianGeneratingTranslations}.

But the \HSF camp has a different view on this. They see \qtref{QT:TPS} and \qtref{QT:OBS} as crutches for those who didn't understand \HSF yet.
And this is what this paper is about. To me, it looks like \HSF wants to have it both ways. To me, it all boils down to wanting all isomorphisms like in equation~\eqref{eq:HSF-iso} to describe the same reality, and at the same time automorphisms like in~\eqref{eq:HSF-auto-time}, which is a particular case of~\eqref{eq:HSF-iso}, to describe different realities.

\subsection{So what's this comment all about?}
\label{s:intro:this}

In \citep{Stoica2022SpaceThePreferredBasisCannotUniquelyEmergeFromTheQuantumStructure} I showed that the \HSF program can't recover uniquely the promised structures, in particular the TPS, the 3D space, a preferred pointer basis, or classicality. More precisely, if such a structure is unique, it would not work as needed in quantum theory.
To make the proof fully general, I developed a framework and applied it to several structures that are hoped to emerge in some research programs. What resulted was quite abstract and long, so I added more concrete and simpler results and examples \citep{Stoica2024DoesTheHamiltonianDetermineTheTPSAndThe3dSpace,Stoica2023PrinceAndPauperQuantumParadoxHilbertSpaceFundamentalism}.
Even including a preferred TPS along with $\oper{H}$ and $\ket{\psi}$ is insufficient to recover a complete description of the world \citep{Stoica2023PrinceAndPauperQuantumParadoxHilbertSpaceFundamentalism,Stoica2023AreObserversReducibleToStructures,Stoica2025WhatMakesYouAnObserver}.

Recently, \citep{SoulasFranzmannDiBiagio2025OnTheEmergenceOfPreferredStructuresInQuantumTheoryV2} carefully studied these results, claimed that they found loopholes in the proof, and believed to exploit these alleged loopholes to produce a counterexample.
The main purpose of this comment is to show exactly where their TPS construction fails.
Second, to show that the results from \citep{Stoica2022SpaceThePreferredBasisCannotUniquelyEmergeFromTheQuantumStructure} don't have the pitfalls Soulas et al. claimed to have identified; they were present only in the review of these results as given in \citep{SoulasFranzmannDiBiagio2025OnTheEmergenceOfPreferredStructuresInQuantumTheoryV2}, but not in the original papers.
Third, that the claimed corrections and generalizations don't avoid the problem, and they were already accounted for, and that the counterexample TPS constructed in \citep{SoulasFranzmannDiBiagio2025OnTheEmergenceOfPreferredStructuresInQuantumTheoryV2} doesn't work like the TPS from quantum theory, but is a good toy to illustrate where \HSF fails.
At the same time, equally important, to explain all of these in a friendly but complete manner, even if it will be more slow-paced and longer than an usual reply. This analysis uses Soulas et al.'s TPS as an illustration, but the same method applies to all candidate emergent structures in \HSF that yield observable physical differences.

I also want to clarify the correct meaning of the relational perspective, and to provide an escape for what can be saved from Hilbert space fundamentalism by suggesting a weaker version. I'll use as an illustration the result from \citep{CotlerEtAl2019LocalityFromSpectrum}.

\section{Basic obstructions for the emergence of a unique TPS}
\label{s:TPS}

\subsection{What's a tensor product structure and what's its role?}
\label{s:TPS-basics}

The decomposition \eqref{eq:hilbert-tensor} is referred to as the \emph{tensor product structure} (TPS) of $\hilbert$.
If the system consists of subsystems, for example particles whose number remains constant, the Hilbert spaces $\hilbert_k$ correspond to these subsystems. Decompositions like \eqref{eq:hilbert-tensor} are also proposed to correspond to regions of space \citep{CarrollSingh2019MadDogEverettianism}.

The TPS specifies subsystems, it determines the interactions between subsystems encoded in $\oper{H}$, and the entanglement of the subsystems of $\ket{\psi}$.

Here and in the rest of the paper we will consider the finite-dimensional case, which is also the focus of \citep{SoulasFranzmannDiBiagio2025OnTheEmergenceOfPreferredStructuresInQuantumTheoryV2}.

Let $\hilbert_1,\hilbert_2,\ldots,\hilbert_n$ and $\hilbert$ be complex Hilbert spaces of respective dimensions $d_1,d_2,\ldots,d_n,$ and $N=d_1 d_2\ldots d_n$. A Hilbert space \emph{isomorphism} (unitary operator between two Hilbert spaces)
\begin{equation}
\label{eq:unitary-to-TPS}
\obs{T}:\hilbert_1\otimes\hilbert_2\otimes\ldots\otimes\hilbert_n\to\hilbert
\end{equation}
determines a \emph{tensor product structure} (TPS) on $\hilbert$.

Note that the TPS shouldn't change if we apply a unitary transformation on each individual factor $\hilbert_j$, so the same TPS is defined by $\obs{T}\(\oper{U}_1\otimes\ldots\otimes \oper{U}_n\)$, where $\oper{U}_k$ is a unitary operator on $\hilbert_k$ for all $k$. We call the operators $\oper{U}_k$ \emph{local unitary operators}. The TPS shouldn't change even if we permute factors having the same dimension. So we can also compose $\obs{T}$ with a permutation $\varsigma$ of $(1,\ldots,n)$ satisfying $\dim\hilbert_k=\dim\hilbert_{\varsigma(k)}$, \ie we compose it with unitary operators $\oper{U}_k:\hilbert_k\to\hilbert_{\varsigma(k)}$.
So while the operators $\obs{T}$ and $\obs{T}\(\oper{U}_1\otimes\ldots\otimes \oper{U}_n\)$ may be very different, they define the same TPS. This is why, whenever $\obs{T}'=\obs{T}\(\oper{U}_1\otimes\ldots\otimes \oper{U}_n\)$ like this, we say that $\obs{T}'$ and $\obs{T}$ are equivalent, and denote this by $\obs{T}'\sim\obs{T}$.

\begin{definition}
\label{def:TPS}
A \emph{tensor product structure} is an equivalence class $\mc{T}=[\obs{T}]$ of Hilbert space isomorphisms as in equation~\eqref{eq:unitary-to-TPS}, under the equivalence relation defined as $\obs{T}\sim\obs{T}'$ if and only if $\obs{T}^{-1}\obs{T}'=\oper{U}_1\otimes\ldots\otimes \oper{U}_n$ for some permutation $\varsigma$ of $(1,\ldots,n)$ and unitary operators $\oper{U}_k:\hilbert_k\to\hilbert_{\varsigma(k)}$.
\end{definition}

All unitary transformations of the form $\obs{T}\oper{U}_1\otimes\ldots\otimes \oper{U}_n\obs{T}^{-1}$, including those permuting the factor spaces, form a subgroup of the unitary group $\U\(\hilbert\)$ which leaves $\mc{T}$ unchanged. We denote this subgroup by $\Stab(\mc{T})$ (because it is the \emph{stabilizer} of the TPS $\mc{T}$).

\begin{definition}[Entanglement]
\label{def:entangled-wrt-TPS}
Let $\mc{T}$ be a TPS on $\hilbert$, and $\obs{T}\in\mc{T}$ as in equation~\eqref{eq:unitary-to-TPS}. A state vector $\ket{\psi}$ is a \emph{product state} (or \emph{separable state}) with respect to $\mc{T}$ if and only if $\obs{T}^{-1}\ket{\psi}=\ket{\psi_1}\otimes\ldots\otimes\ket{\psi_n}$, where all $\ket{\psi_k}\in\hilbert_k$. Otherwise, we say that the state $\ket{\psi}$ is \emph{entangled} with respect to $\mc{T}$.
\end{definition}

The notions of product states and entangled states with respect to a TPS from Definition~\ref{def:entangled-wrt-TPS} are well defined, in the sense that they don't depend of the choice of $\obs{T}\in\mc{T}$.
But they depend dramatically of the choice of the TPS $\mc{T}$, and this difference is physically observable.

\subsection{How many physically distinct TPSs are there?}
\label{s:TPS-physical-how-many}

It is evident that the only God-given TPS of a Hilbert space with no other structure is the trivial one, obtained for $n=1$. But how many TPSs are there for fixed dimensions $d_1,d_2,\ldots,d_n$ and $\dim\hilbert=N=d_1 d_2\ldots d_n$?
To find out, we need to ``count'' them up to the contributions of unitary operators $\oper{U}_1,\ldots,\oper{U}_k$ from the unitary transformations $\obs{T}\in\U(\hilbert)$, since their presence leaves the TPS unchanged. The number of possible permutations is finite, so we can ignore it, since the other parameters are continuous. But we also need to take into account that local phase changes are equivalent to each other and to global ones: for any complex number $c$ with $\abs{c}=1$,
$c\obs{T}\(\oper{U}_1\otimes\ldots\otimes\oper{U}_n\)=\obs{T}\(c\(\oper{U}_1\otimes\ldots\otimes\oper{U}_n\)\)= \obs{T}\(\(c\oper{U}_1\)\otimes\ldots\otimes\oper{U}_n\)=\ldots=\obs{T}\(\oper{U}_1\otimes\ldots\otimes\(c\oper{U}_n\)\)$.
Putting these together, we get that we need to find out the dimension of the manifold $\SU(\hilbert)/{\(\SU(\hilbert_1)\times\ldots\times\SU(\hilbert_n)\)}$, where $\SU(\hilbert)$ is the \emph{special unitary group} of $\hilbert$, the group of unitary transformations with determinant $1$. Since $\dim\SU(N)=N^2-1$, the possible distinct tensor product structures form a manifold of dimension $d_1^2 d_2^2\ldots d_n^2-d_1^2- d_2^2-\ldots -d_n^2+n-1$. For example, if all $d_k=d$, we get that the dimension of the space of distinct TPSs is $d^{2n}-nd^2+n-1$. Therefore, the number of parameters that specify a TPS grows exponentially with the number $n$ of qudits.

There is no way to choose a special TPS and call it emergent just from $\hilbert$. At least not without using some information from $\oper{H}$ and maybe $\ket{\psi}$.

But maybe two distinct TPSs $\mc{T}\neq \mc{T}'$ can be physically equivalent. Maybe we can weaken the equivalence relation $\sim$, so that we get a unique TPS. The following result shows that if we do this, physics has a word to say, and that word is ``no!''.
\begin{proposition}
\label{thm:TPS-physical}
For any distinct TPSs $\mc{T}\neq \mc{T}'$ on the Hilbert space $\hilbert$, there is a state vector $\ket{\psi}\in\hilbert$ which is a product state with respect to $\mc{T}$ and an entangled state with respect to $\mc{T}'$.
\end{proposition}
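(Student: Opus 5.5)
The plan is to argue by contraposition. Suppose $\mc{T}=[\obs{T}]$ and $\mc{T}'=[\obs{T}']$ are TPSs such that every product state with respect to $\mc{T}$ is also a product state with respect to $\mc{T}'$. We will show $\obs{T}\sim\obs{T}'$, so $\mc{T}=\mc{T}'$.

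\textbf{Reduction to a preserver problem.} Since product states do not depend on the representative (as noted after Definition~\ref{def:entangled-wrt-TPS}), we fix $\obs{T},\obs{T}'$. We also discard trivial one-dimensional factors, since they do not affect product states. Consider the unitary
\[
\oper{V}:=\obs{T}'^{-1}\obs{T}:\hilbert_1\otimes\ldots\otimes\hilbert_n\to\hilbert'_1\otimes\ldots\otimes\hilbert'_m .
\]
The hypothesis says that $\oper{V}$ maps the set $\Sigma$ of decomposable vectors $v_1\otimes\ldots\otimes v_n$ (the cone over the Segre variety) into the corresponding set $\Sigma'$. Applying the same reasoning with the roles of $\mc{T}$ and $\mc{T}'$ exchanged is not available, because the hypothesis is one-sided. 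However, $\oper{V}(\Sigma)$ is a closed irreducible cone of complex dimension $\sum_k(d_k-1)+1$ inside $\Sigma'$. Comparing maximal linear subspaces, as in the next step, will show that the factor structures match and that $\oper{V}(\Sigma)=\Sigma'$. The goal is then a Westwick/Marcus--Moyls type statement: an invertible linear map sending decomposable tensors to decomposable tensors is of the form $(\oper{A}_1\otimes\ldots\otimes\oper{A}_n)$ composed with a permutation $\varsigma$ of equal-dimensional factors, with $\oper{A}_k$ invertible.

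\textbf{Key step: the maximal linear subspaces of $\Sigma$.} For fixed $v_j$ with $j\neq k$, the set $L_k(v)=\{v_1\otimes\ldots\otimes u\otimes\ldots\otimes v_n : u\in\hilbert_k\}$ is a linear subspace of dimension $d_k$ lying in $\Sigma$. Every linear subspace contained in $\Sigma$ lies in one such "$k$-line". This follows because if $x\otimes y$ and $x'\otimes y'$ differ in two slots, their sum is not decomposable. Since $\oper{V}$ is linear, it maps each $L_k(v)$ to a linear subspace inside $\Sigma'$, hence into some $L'_{\varsigma}(w)$.

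Two $k$-lines that share a point must go to lines of the same type, since they intersect in a single ray and their span is not contained in $\Sigma'$ otherwise. Since $\Sigma\setminus\{0\}$ is connected through chains of such lines, the type $\varsigma(k)$ is well defined. Dimension counting then forces $\varsigma$ to be a bijection with $d_k=d'_{\varsigma(k)}$, so in particular $m=n$. Fixing base vectors and moving one slot at a time then shows that $\oper{V}$ acts slotwise, which gives $\oper{V}=\oper{P}_\varsigma(\oper{A}_1\otimes\ldots\otimes\oper{A}_n)$.

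For $n=2$ this is simply the classical theorem that invertible linear maps preserving rank-one matrices are $X\mapsto AXB$ or $X\mapsto AX^{T}B$, the latter corresponding to the swap $\varsigma$. One could alternatively carry out the whole step by induction on $n$ from this case.

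\textbf{Unitarity of the factors.} Since $\oper{V}$ is unitary, $\oper{A}_1\otimes\ldots\otimes\oper{A}_n$ is unitary. Then
\[
\oper{A}_1^\dagger\oper{A}_1\otimes\ldots\otimes\oper{A}_n^\dagger\oper{A}_n=\mathbb{1},
\]
which forces each $\oper{A}_k^\dagger\oper{A}_k=\lambda_k\mathbb{1}$ with $\prod_k\lambda_k=1$ and $\lambda_k>0$. Rescaling each $\oper{A}_k$ by $\lambda_k^{-1/2}$ makes every factor unitary. Therefore $\obs{T}^{-1}\obs{T}'=\oper{V}^{-1}$ is a local unitary composed with a permutation, so $\obs{T}\sim\obs{T}'$ by Definition~\ref{def:TPS}, which contradicts $\mc{T}\neq\mc{T}'$.

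The main obstacle is the key step, namely that linear subspaces of the Segre cone are exactly the subspaces of single-slot lines and that the slot type is propagated consistently. This is where the real content lies; I would first try to cite the Westwick preserver theorem directly, and only fall back on the line argument if a self-contained proof is wanted.
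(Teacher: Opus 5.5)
Your route is the paper's: contraposition, reduction to an invertible linear map $\oper{V}$ that sends decomposable tensors into decomposable tensors, Westwick's theorem, and then unitarity of the factors. The paper asserts that last step in one line; you justify it correctly via $\oper{A}_1^\dagger\oper{A}_1\otimes\ldots\otimes\oper{A}_n^\dagger\oper{A}_n=\oper{I}$ and rescaling. The paper phrases its contradiction hypothesis as a biconditional but uses only one direction, as you do.

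What is wrong is your extension to a target TPS with a different factor structure $\hilbert'_1\otimes\ldots\otimes\hilbert'_m$. Under a one-sided hypothesis the conclusion is then false, so your claim that comparing maximal linear subspaces forces $m=n$, $d_k=d'_{\varsigma(k)}$ and $\oper{V}(\Sigma)=\Sigma'$ cannot hold. For instance, let $\mc{T}$ be of type $\C^2\otimes\C^2\otimes\C^2$ and let $\mc{T}'$ be its coarse-graining of type $\C^4\otimes\C^2$, obtained by grouping the first two factors. Every $\mc{T}$-product state is then a $\mc{T}'$-product state. Both $1$-lines and $2$-lines land in $1'$-lines, so $\varsigma$ is not injective, and $\dim\oper{V}(\Sigma)=4<5=\dim\Sigma'$. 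The trivial one-factor TPS is an even simpler counterexample. The proposition is about TPSs on the same $\hilbert_1\otimes\ldots\otimes\hilbert_n$, as in Definition~\ref{def:TPS}. There $\dim\oper{V}(\Sigma)=\dim\Sigma'$, so $\oper{V}(\Sigma)=\Sigma'$ follows at once from irreducibility of $\Sigma'$, and Westwick's theorem for nonsingular maps of a fixed tensor space applies directly.

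If you want the self-contained line argument, the sentence about ``two $k$-lines that share a point'' must be repaired. Distinct lines of the same type never share a nonzero point: their factors agree in every slot but $k$, so they coincide. What you actually need is two facts.
\begin{enumerate}
\item[(i)] Lines of different types through a common point have images of different types. Their span is not in $\Sigma$, but this transfers to $\Sigma'$ only via $\oper{V}^{-1}(\Sigma')\subseteq\Sigma$, i.e.\ via the equality $\oper{V}(\Sigma)=\Sigma'$ above.
\item[(ii)] A quadrilateral argument. Take $p=a\otimes b$, $q=a'\otimes b$, $p'=a\otimes b'$, $q'=a'\otimes b'$ in slots $k,l$, with the other slots fixed. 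If the $l$-line through $q$ had an image type different from that of the $l$-line through $p$, then $\oper{V}(p')$ and $\oper{V}(q')$ would differ in two slots. This is impossible, since $\oper{V}(p')+\oper{V}(q')=\oper{V}((a+a')\otimes b')\in\Sigma'$.
\end{enumerate}
Only (ii) makes ``connected through chains'' yield a well-defined $\varsigma$. Then (i) makes $\varsigma$ injective, and summing $d_k\le d_{\varsigma(k)}$ over $k$ gives $d_k=d_{\varsigma(k)}$.
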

\begin{proof}
We will prove this by contradiction. Suppose that for all $\ket{\psi}\in\hilbert$, $\ket{\psi}$ is a product state w.r.t. $\mc{T}$ if and only if it is a product state w.r.t. $\mc{T}'$.
Let $\obs{T}\in\mc{T}$ and $\obs{T}'\in\mc{T}'$. Then, $\obs{T}^{-1}\obs{T}'$ maps product states from $\hilbert_1\otimes\ldots\otimes\hilbert_n$ into product states from $\hilbert_1\otimes\ldots\otimes\hilbert_n$.
Then, according to a theorem by \citet{Westwick1967TransformationsOnTensorSpaces} (or Lemma A.1 from \citealp{SoulasFranzmannDiBiagio2025OnTheEmergenceOfPreferredStructuresInQuantumTheoryV2}), $\obs{T}^{-1}\obs{T}'=\oper{U}_1\otimes\ldots\otimes \oper{U}_n$ for some permutation $\varsigma$ of $(1,\ldots,n)$ and some operators $\oper{U}_k:\hilbert_k\to\hilbert_{\varsigma(k)}$. Since the product of these operators can only be unitary, one obtains $\mc{T}=\mc{T}'$.
This proves the result by contradiction.
\end{proof}

So we can't get rid of the underdetermination of the TPS by considering different TPSs to be equivalent, because they are not physically equivalent.

Still too many TPSs. So we will try the next best thing, which is to construct the TPS from the other God-given ingredients, the Hamiltonian $\oper{H}$ and, if necessary, also the state vector $\ket{\psi}$.

\subsection{TPS from the Hamiltonian?}
\label{s:TPS-H}

We'll try first to obtain a preferred TPS by using as an ingredient only the Hamiltonian $\oper{H}$.
Suppose we have a procedure, algorithm, method to construct such a TPS.
One should be able to encode such a procedure in a look-up table, a function that associates to the Hamiltonian $\oper{H}$ and the list of dimensions $d_1,d_2,\ldots,d_n$ an isomorphism $\obs{T}$ as in equation~\eqref{eq:unitary-to-TPS},
\begin{equation}
\label{eq:H2tps}
\tau(\oper{H},d_1,d_2,\ldots,d_n)=\obs{T}.
\end{equation}

Then, $\obs{T}$ determines the TPS by the equivalence from Definition~\ref{def:TPS}.
Our procedure can depend of $\oper{H}$, but not of a particular basis of $\hilbert$.
That is, for any unitary $\oper{S}\in\U(\hilbert)$, $\tau$ has to be consistent with transforming both $\oper{H}$ and the TPS in sync using $\oper{S}$,
\begin{equation}
\label{eq:H2tps-invar}
\tau(\oper{S}\oper{H}\oper{S}^\dagger,d_1,d_2,\ldots,d_n)=\oper{S}\obs{T}\in[\obs{T}].
\end{equation}

In particular, any unitary transformation $\oper{S}$ of $\hilbert$ that leaves $\oper{H}$ unchanged, $[\oper{S},\oper{H}]=0$, has to give the same TPS,
\begin{equation}
\label{eq:H2tps-invar-H}
\oper{S}\obs{T}\in[\obs{T}].
\end{equation}

That is, $\oper{S}\obs{T}=\obs{T}\(\oper{U}_1\otimes\ldots\otimes\oper{U}_n\)$, with $\oper{U}_k$ as in Definition~\ref{def:TPS}, so $\oper{S}=\obs{T}\(\oper{U}_1\otimes\ldots\otimes\oper{U}_n\)\obs{T}^{-1}$.
This implies that it has the form $\oper{S}=\wt{\oper{U}}_1\otimes_{\mc{T}}\ldots\otimes_{\mc{T}}\wt{\oper{U}}_n$, where the unitary operators $\wt{\oper{U}}_k$ are local with respect to the TPS $\mc{T}$, possibly permuting its factors.

Then, to ensure that the TPS is unique, the symmetries of the Hamiltonian have to be among the symmetries of the TPS. But simple dimension counting shows that this is impossible. The dimension of the stabilizer group of $\oper{H}$, $\Stab(\oper{H}):=\left\{\oper{U}\in\U(\hilbert)\middle|\oper{S}\oper{H}\oper{S}^\dagger=\oper{H}\right\}$, is between $N=d_1 d_2 \ldots d_n$ and $N^2=d_1^2 d_2^2\ldots d_n^2$, where the lowest bound is achieved when all eigenvalues of $\oper{H}$ are distinct, and the upper bound when they are equal.
But the dimension of the stabilizer of the TPS, $\Stab(\mc{T})$, is, as seen above, $d_1^2+d_2^2+\ldots+d_n^2-n+1$. For example, if all $d_k=d$, we get that the dimension of $\Stab(\oper{H})$ is between $d^n$ and $d^{2n}$, so it grows exponentially. But the dimension of $\Stab(\mc{T})$ is $nd^2-n+1$. Therefore, 

\begin{theorem}
\label{thm:no-unique-TPS-from-H}
There is no procedure to obtain a unique TPS from the Hamiltonian alone.
\end{theorem}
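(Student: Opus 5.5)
The plan is to argue by contradiction, formalizing the dimension-counting remark that precedes the statement. Suppose there is a procedure $\tau$ as in equation~\eqref{eq:H2tps} that assigns to every Hamiltonian $\oper{H}$ (and list of dimensions $d_1,\ldots,d_n$ with $n\geq 2$, all $d_k\geq 2$) an isomorphism $\obs{T}$, and that it satisfies the covariance condition~\eqref{eq:H2tps-invar}. As already derived in the text, covariance forces every $\oper{S}\in\Stab(\oper{H})$ to satisfy~\eqref{eq:H2tps-invar-H}, \ie
\begin{equation*}
\Stab(\oper{H})\subseteq\Stab(\mc{T}),\qquad \mc{T}=[\tau(\oper{H},d_1,\ldots,d_n)].
\end{equation*}
So it suffices to show that this inclusion fails for every $\oper{H}$, or at least for some $\oper{H}$. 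A single failure already rules out a procedure that works for all Hamiltonians.

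The key steps, in order, are as follows. First, compute $\dim\Stab(\oper{H})$. Diagonalize $\oper{H}$ with eigenspaces of multiplicities $m_1,\ldots,m_r$, $\sum m_i=N$. Then $\Stab(\oper{H})\cong \U(m_1)\times\ldots\times\U(m_r)$, so its dimension is $\sum m_i^2$. This lies between $N$ (nondegenerate spectrum) and $N^2$ (a multiple of the identity).

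Second, compute $\dim\Stab(\mc{T})$. Its identity component is the image of $\U(\hilbert_1)\times\ldots\times\U(\hilbert_n)$ under $(\oper{U}_k)\mapsto\obs{T}(\oper{U}_1\otimes\ldots\otimes\oper{U}_n)\obs{T}^{-1}$. The kernel of this map consists of the tuples of phases whose product is $1$, of dimension $n-1$. The permutations contribute only finitely many components. Hence $\dim\Stab(\mc{T})=\sum_k d_k^2-n+1$.

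Third, compare the two. Since $\Stab(\oper{H})$ and $\Stab(\mc{T})$ are closed subgroups, hence Lie groups, the inclusion would give $\dim\Stab(\oper{H})\leq\dim\Stab(\mc{T})$, that is
\begin{equation*}
N=d_1\cdots d_n\leq\sum m_i^2\leq \sum_k d_k^2-n+1 .
\end{equation*}
Whenever $\prod_k d_k>\sum_k d_k^2-n+1$ this fails for every $\oper{H}$, and we are done. This covers, for example, all $d_k=d\geq 2$ with $n$ large, where $d^n>nd^2$.

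The main obstacle is the small cases where this crude inequality does not bite, such as two qubits: there $N=4$ while $\dim\Stab(\mc{T})=7$. For these I would not rely on the lower bound $N$, but instead exhibit a specific Hamiltonian with a large stabilizer. The first choice is $\oper{H}=\lambda\mathbb{I}$, whose stabilizer is all of $\U(\hilbert)$, of dimension $N^2$. Since $N^2=\prod d_k^2>\sum_k d_k^2-n+1$ whenever $n\geq 2$ and all $d_k\geq 2$, the inclusion fails. Alternatively, and more directly, $\U(\hilbert)\subseteq\Stab(\mc{T})$ would mean that every unitary preserves the set of product states, which contradicts Proposition~\ref{thm:TPS-physical}.

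For a less degenerate witness, I would take $\oper{H}$ with a single degenerate eigenspace of dimension $m$ with $m^2>\sum d_k^2$. A generic such eigenspace is not a product subspace, so its $\U(m)$ cannot sit inside the local group. This shows that the no-go holds not only for trivial Hamiltonians, completing the proof that no such procedure $\tau$ can exist.
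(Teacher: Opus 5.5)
Your proposal is correct and follows the paper's own argument: covariance forces $\Stab(\oper{H})\subseteq\Stab(\mc{T})$, and comparing $\dim\Stab(\oper{H})\in[N,N^2]$ with $\dim\Stab(\mc{T})=\sum_k d_k^2-n+1$ rules this out. Your witness $\oper{H}=\lambda\mathbb{I}$ is a useful addition for small cases such as two qubits, where the count the paper states only for large $n$ does not bite. It only refutes procedures that must also handle degenerate Hamiltonians, though; comparing ranks instead of dimensions gives failure for \emph{every} $\oper{H}$. The unitaries diagonal in an eigenbasis of $\oper{H}$ form an $N$-dimensional torus in $\Stab(\oper{H})$, while $\Stab(\mc{T})$ has rank $\sum_k d_k-n+1<N$.
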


Whenever a TPS is obtained, infinitely many other distinct TPSs can be obtained with the same procedure (unless, of course, $n=1$). The possible distinct TPSs form a space whose dimension grows exponentially with the number $n$ of qudits.

This fact, that the number of parameters needed to specify the TPS grows exponentially, was proved in \citep{Stoica2024DoesTheHamiltonianDetermineTheTPSAndThe3dSpace}. \citep{SoulasFranzmannDiBiagio2025OnTheEmergenceOfPreferredStructuresInQuantumTheoryV2} acknowledged this and the proof can also be found in \emph{``2.2.2 $\widehat{H}$ has more symmetries than $\mc{T}$''} in their paper.

\begin{remark}
\label{rem:cotleretal-TPS}
How can this be possible, when \citep{CotlerEtAl2019LocalityFromSpectrum} gave a construction of a unique TPS from the Hamiltonian's spectrum? In fact, their result is not about the uniqueness of the TPS, it's about the uniqueness of the form of the Hamiltonian when expressed in terms of local operators under the constraint that it couples a small number of them.

Proposition~\ref{thm:TPS-physical} shows that physics disagrees with using equivalence classes of TPSs as TPSs, including the TPS from \citep{CotlerEtAl2019LocalityFromSpectrum}.
To choose one of their TPSs, one also needs to fix a number of parameters exponential in the number of factors $n$.

So their result is different, but it was misunderstood as being about the uniqueness of the TPS itself. The above non-uniqueness proof covers the non-uniqueness of their TPS too.
But despite this, Cotler et al.'s construction has other virtues, which I will discuss in subsections~\sref{s:cotler-emergence-example} and~\sref{s:world-v-law-cotleretal}.
\end{remark}

What else can we try to get rid of this underdetermination of the TPS?
What if another God given structure can partner in with the missing parameters? 
The only other structure in the Hilbert space fundamentalism is the state vector. It has exactly $N$ components, more than enough if the Hamiltonian's eigenvalues are all distinct.
This seems like a match made in heaven, but we still need a way to map the pair $(\oper{H},\ket{\psi})$ to a unique TPS, similar to the map from equation~\eqref{eq:H2tps}:
\begin{equation}
\label{eq:Hv2tps}
\tau(\oper{H},{\color{red}\ket{\psi}},d_1,d_2,\ldots,d_n)=\mc{T}.
\end{equation}

The TPS \citep{SoulasFranzmannDiBiagio2025OnTheEmergenceOfPreferredStructuresInQuantumTheoryV2} can be understood as constructing such a map.

\section{The TPS construction done by Soulas et al.}
\label{s:TPS-soulas}

Now we move to considering the state vector $\ket{\psi}$ as an input to restrict the large number of different TPSs determined by the Hamiltonian $\oper{H}$ only.
The goal is to find a unique map $\tau(\oper{H},\ket{\psi},d_1,d_2,\ldots,d_n)=\mc{T}$.
This is achieved by Soulas et al.'s construction.

\subsection{Soulas et al.'s construction of a unique TPS}
\label{s:TPS-soulas-construction}

Soulas et al. considered a Hamiltonian with distinct eigenvalues $\omega_1<\ldots<\omega_N$, and an eigenbasis $(\ket{\omega_1},\ldots,\ket{\omega_N})$.
This basis and the state vector $\ket{\psi}$ determine $N$ invariants, $\abs{\braket{\omega_1}{\psi}}^2$, $\ldots$, $\abs{\braket{\omega_N}{\psi}}^2$.
Each basis vector $\ket{\omega_j}$ is unique up to some phase factor $e^{i\alpha_j}$, but this phase factor doesn't contribute to the invariant $\abs{\braket{\omega_j}{\psi}}^2$.
Maybe these invariants can be used to fix the remaining $N$ parameters.

Indeed they can, assuming that all $\braket{\omega_j}{\psi}\neq 0$, and  with the help of additional constants.

Let's state explicitly the conditions of Theorem 3.9 from \citep{SoulasFranzmannDiBiagio2025OnTheEmergenceOfPreferredStructuresInQuantumTheoryV2}:
\begin{equation}
\label{eq:soulas-conditions}
\begin{cases}
\omega_1<\ldots<\omega_N \\
\braket{\omega_j}{\psi}\neq 0 \text{ for all }j. \\
\end{cases}
\end{equation}

Soulas et al. construct their TPS in the proof of Theorem 3.9, based on two key insights.

\begin{insight}
\label{insight:entropies}
We need a way to identify a unique TPS.
Recall that the TPS $\mc{T}$ determines the subsystems of $\ket{\psi}$, but also the von Neumann entropy of each subsystem $k$,
\begin{equation}
\label{eq:entropy-TPS}
S(\rho^{\mc{T}}_k)=-\tr\(\rho^{\mc{T}}_k\ln \rho^{\mc{T}}_k\)\in[0,\ln d_k],
\end{equation}
where $\rho^{\mc{T}}_k=\tr^{\mc{T}}_k\dyad{\psi}$ is the reduced density operator for the $k$-th subsystem in the TPS $\mc{T}$.

The converse is also true: if we know the von Neumann entropies for all unit vectors from $\hilbert$, we can reconstruct the TPS.
So all we need is a list of von Neumann entropies $s_{\ket{\psi'},k}$, for each subsystem $k$ and each unit vector $\ket{\psi'}\in\hilbert$.
\end{insight}

\begin{proof}[Proof of Insight~\ref{insight:entropies}]
Let's show that, if we can reconstruct a TPS from the von Neumann entropies of all unit vectors from $\hilbert$, the result is unique.
It is enough to consider those vectors having von Neumann entropy zero for each subsystem, \ie $s_{\ket{\psi'},k}=0$ for all $k$.
Such vectors are products in the TPS we are looking for.
Then, as in the proof of Proposition~\ref{thm:TPS-physical}, the uniqueness of the TPS follows from a theorem by \citet{Westwick1967TransformationsOnTensorSpaces}, or from \citep[Lemma A.1]{SoulasFranzmannDiBiagio2025OnTheEmergenceOfPreferredStructuresInQuantumTheoryV2}.

The existence of such a TPS requires that all constants $s_{\ket{\psi'},k}$ together are compatible with a TPS. This happens only for specific cases, almost all choices giving incompatible constraints.
Almost all functions $s_{\ket{\psi'},k}$ specify inconsistently the von Neumann entropies, not leading to a solution at all, as acknowledged in \citep{SoulasFranzmannDiBiagio2025OnTheEmergenceOfPreferredStructuresInQuantumTheoryV2}. This is because the TPS is determined by a finite number of parameters, but the function $s$ provides a continuous infinity of parameters $s_{\ket{\psi'},k}$, so most of them must be overdetermined. There is no known general way to specify them consistently.
But Soulas et al. added the condition that the choice of $s_{\ket{\psi'},k}$ is consistently made, and it is correct that there is always such a list of $s_{\ket{\psi'},k}$, for any target TPS.
\end{proof}

So, to reconstruct the TPS, all we need is the list of the von Neumann entropies of the $n$ subsystems of all unit vectors in $\hilbert$.
But all unit vectors are the same, so we need a way to label them distinctly.
A basis gives a unique identifier for each vector, but we are not allowed to rely on an arbitrary basis, because then the construction will not be invariant.
So how to endow the vectors with unique identifiers, without using an arbitrary basis? Soulas et al. found a way:

\begin{insight}
\label{insight:unique-id}
We already have a vector $\ket{\psi}\in\hilbert$.
All we need is a way to specify the ``address'' where any other vector $\ket{\psi'}\in\hilbert$ is located in $\hilbert$, using $\ket{\psi}$ and $\oper{H}$ as reference.
There is a way: we can use a complex polynomial $R\in\C[X]$ to construct another vector $\ket{\psi'}\in\hilbert$ by:
\begin{equation}
\label{eq:polynomial-id}
\ket{\psi'}=R\(\oper{H}\)\ket{\psi}.
\end{equation}

We can then label each vector $\ket{\psi'}$ by a polynomial $R$.
If $\oper{H}$ and $\ket{\psi}$ satisfy the conditions~\eqref{eq:soulas-conditions}, we can label like this any vector from $\hilbert$.
\end{insight}

\begin{proof}[Proof of Insight~\ref{insight:unique-id}]
A polynomial $R\in\C[X]$ has the form $R(X)=a_0+a_1 X+a_2 X^2+\ldots+a_m X^m$ for some $m$, with all $a_j$ complex numbers. Then,
\begin{equation}
\label{eq:polynomial-H}
R(\oper{H})=a_0+a_1 \oper{H}+a_2 \oper{H}^2+\ldots+a_m \oper{H}^m.
\end{equation}

Using $\oper{H}=\sum_j\omega_j\dyad{\omega_j}$, from equation~\eqref{eq:polynomial-H} we get
\begin{equation}
\label{eq:polynomial-H-eigenvalues}
R(\oper{H})=\sum_j R(\omega_j)\dyad{\omega_j}.
\end{equation}

Any complex linear operator $\oper{A}$ that commutes with $\oper{H}$ has the form $\oper{A}=\sum_j a_j\dyad{\omega_j}$, where $a_j\in\C$. Then, we can find a polynomial $R\in\C[X]$ so that $\oper{A}=R(\oper{H})$, by solving the system of $N$ equations $R(\omega_j)=a_j$, $j\in\{1,\ldots,N\}$ for the coefficients of $R$.

Then, since all $\braket{\omega_j}{\psi}\neq 0$, any other vector $\ket{\psi'}\in\hilbert$ can be reached by choosing
\begin{equation}
\label{eq:oper-A-H}
\oper{A}=\sum_j \frac{\braket{\omega_j}{\psi'}}{\braket{\omega_j}{\psi}} \dyad{\omega_j}.
\end{equation}

So all we have to do for this is to solve the system of $N$ equations
\begin{equation}
\label{eq:Rvec-coeff}
\left\{R(\omega_j)=\frac{\braket{\omega_j}{\psi'}}{\braket{\omega_j}{\psi}}\right.,
\end{equation}
to find out the coefficients of $R\in\C[X]$.

This proves the surjectivity of the function
\begin{equation}
\label{eq:Rvec}
\begin{aligned}
v:\C[X] & \to \hilbert \\
v(R) & = R(\oper{H})\ket{\psi}. \\
\end{aligned}
\end{equation}

This allows us to label any vector in $\hilbert$ in terms of $\oper{H}$, $\ket{\psi}$, and $R\in\C[X]$.
\end{proof}

\begin{proof}[Proof of uniqueness of the TPS]
It's time to combine Insights~\ref{insight:entropies} and~\ref{insight:unique-id}.
Soulas et al. supplemented the God-given inputs $\oper{H}$ and $\ket{\psi}$ with a function
\begin{equation}
\label{eq:sRi}
\begin{aligned}
s:\C[X]\times\{1,2,\ldots,n\} &\to \R \\
\(R,k\) & \mapsto s_{R,k}\in[0,\ln d_k]. \\
\end{aligned}
\end{equation}

The polynomials $R\in\C[X]$ are used to identify new vectors $\ket{\psi'}=R\(\oper{H}\)\ket{\psi}$, and $s_{R,k}$ specifies the von Neumann entropy of the $k$-th subsystem of $\ket{\psi'}$.
By prescribing that the TPS is such that the von Neumann entropy of the $k$-th subsystem of $R[\oper{H}]\ket{\psi}$ with respect to the TPS is equal to $s_{R,k}$, if a TPS satisfying all these constraints exists, it is unique.
\end{proof}

In Soulas et al.'s construction of a unique TPS, the map~\eqref{eq:Hv2tps} becomes
\begin{equation}
\label{eq:Hv2tps-soulas}
\tau(\oper{H},\ket{\psi},d_1,d_2,\ldots,d_n,{\color{red}s_{R,k}})=\mc{T}.
\end{equation}

Since the space of polynomials $\C[X]$ is a continuum, the function $s:\C[X]\times\{1,\ldots,n\}\to\R$ is specified by an uncountable infinity of continuous parameters from $[0,\ln d_k]$. \citep{SoulasFranzmannDiBiagio2025OnTheEmergenceOfPreferredStructuresInQuantumTheoryV2} explained that this huge number of conditions can be reduced to a countable infinity by using polynomials with complex rational coefficients and taking $s_{R,k}$ to be continuous.

\begin{question}
\label{q:is-uniqueness-enough}
Now that a unique TPS was constructed, should we consider the case closed?

Is the no-go result from \citep{Stoica2022SpaceThePreferredBasisCannotUniquelyEmergeFromTheQuantumStructure} refuted by a counterexample?
Is \HSF saved?
\end{question}

Note that the result from \citep{Stoica2022SpaceThePreferredBasisCannotUniquelyEmergeFromTheQuantumStructure} doesn't say that there is no procedure to construct a unique TPS (see \ref{remark:unique-exist}).
It says that if you have an invariant way to construct a unique TPS, even up to an equivalence relation, then this TPS is not ``physically relevant'', in a sense defined in \citep{Stoica2022SpaceThePreferredBasisCannotUniquelyEmergeFromTheQuantumStructure}. 
To see this, we'll have to ask ourselves the following question:
\begin{question}
\label{q:physical-relevance}
How does \citeauthor{SoulasFranzmannDiBiagio2025OnTheEmergenceOfPreferredStructuresInQuantumTheoryV2}'s TPS behave in a world that changes in time?
\end{question}

\citeauthor{SoulasFranzmannDiBiagio2025OnTheEmergenceOfPreferredStructuresInQuantumTheoryV2} don't analyze their TPS more, jumping straight to the ``Conclusion'' section.

So let's do this ourselves: since Soulas et al. gave us a concrete toy, let's play with it and see what happens. We'll do this in section~\sref{s:TPS-soulas-nogo}.
But first, we'll take a closer look at Insight~\ref{insight:unique-id} (if you're in a hurry you can skip to section~\sref{s:TPS-soulas-nogo}). Section~\sref{s:soulas-emergence-inconsequential} discusses Insight~\ref{insight:entropies}.

\subsection{Why does Soulas et al.'s construction give a unique TPS?}
\label{s:TPS-soulas-basis}

This subsection can be skipped, but there is something interesting behind Insight~\ref{insight:unique-id}.
If $\oper{H}$ and $\ket{\psi}$ satisfy the conditions~\eqref{eq:soulas-conditions}, which are used in the construction of a preferred TPS from \citep{SoulasFranzmannDiBiagio2025OnTheEmergenceOfPreferredStructuresInQuantumTheoryV2}, a unique basis can be obtained without imposing additional symmetry breaking.

First I'll show where a preferred basis emerges implicitly in  Soulas et al.'s construction of a TPS. Then I'll show a more explicit way to build such a unique basis, which can be used to simplify their proof.

Recall that the construction method from Soulas et al. is based on labeling every vector from $\hilbert$ by a polynomial $R\in\C[X]$.
The coefficients of the polynomial $R$ can be obtained by solving the system~\eqref{eq:Rvec-coeff}. Note that system~\eqref{eq:Rvec-coeff} contains $N$ equations, so it is sufficient to restrict the solutions to polynomials of degree at most $N-1$, so $R\in\C_{N-1}[X]$. 
Then, the restriction of the function $v$ from equation~\eqref{eq:Rvec} to $\C_{N-1}[X]$ is bijective. 
Also, $\C_{N-1}[X]$ is a vector space of dimension $N$.
Moreover, the map $v$ is linear, because for all $a,a'\in\C$ and $R,R'\in\C_{N-1}[X]$,
\begin{equation}
\label{eq:v-is-linear}
v(aR+a'R')=aR(\oper{H})\ket{\psi}+a'R'(\oper{H})\ket{\psi'}=av(R)+a'v(R').
\end{equation}

Therefore, $v$ maps the monomial basis $\(1,X,X^2,\ldots,X^{N-1}\)$ of $\C_{N-1}[X]$ into a unique basis of $\hilbert$, $\(\ket{\psi},\oper{H}\ket{\psi},\ldots,\oper{H}^j\ket{\psi},\ldots,\oper{H}^{N-1}\ket{\psi}\)$.
This basis is not, in general, orthonormal, but it is sufficient to label all vectors from $\hilbert$.
Other special bases can be obtained from special bases of $\C_{N-1}[X]$ like Lagrange, Newton, Bernstein, Hermite, Chebyshev, \etc.

Note that, with the method presented here, we need only those $s(R,k)$ for $R\in\C_{N-1}[X]$. We also only need those polynomials $R$ for which $R(\oper{H})\ket{\psi}$ is a unit vector. This reduces dramatically the set of constraints, and this set can be probably reduced even more, but we will still need infinitely many constraints to determine a unique TPS.

\begin{remark}
\label{rem:alternative-construction-TPS}
A perhaps more direct way to obtain a preferred basis, and then a preferred TPS, is the following.
A non-degenerate Hermitian operator $\oper{H}$ comes with an orthogonal eigenbasis $(\ket{\omega_1},\ldots,\ket{\omega_N})$, which is unique up to phase factors. This follows from the first of the conditions~\eqref{eq:soulas-conditions}. Then, a general form of the eigenbasis is $(e^{i\alpha_1}\ket{\omega_1},\ldots,e^{i\alpha_N}\ket{\omega_N})$. Using the second condition from~\eqref{eq:soulas-conditions}, we can fix the phases $\alpha_j$ uniquely within the interval $[0,2\pi)$ by using the vector $\ket{\psi}$.
To fix the phases, we require that the components $e^{i\alpha_j}\braket{\omega_j}{\psi}$ are real and positive, $e^{i\alpha_j}\braket{\omega_j}{\psi}\in(0,1)$.
This gives $e^{-i\alpha_j}=\frac{\braket{\omega_j}{\psi}}{\abs{\braket{\omega_j}{\psi}}}$. We obtain that, under the conditions~\eqref{eq:soulas-conditions}, $\oper{H}$ and $\ket{\psi}$ determine a unique basis of $\hilbert$.
This can be used to simplify the construction from Insight~\ref{insight:unique-id}.

Can we also replace the step achieved by Insight~\ref{insight:entropies} with a simpler one?
Definitely: now all we have to do is to identify this preferred basis with the tensor product of some bases of each $\hilbert_k$, obtaining an isomorphism $\obs{T}$ as in equation~\eqref{eq:unitary-to-TPS}.
The bases of each $\hilbert_k$ can be chosen freely due to Definition~\ref{def:TPS}, but there is some freedom in choosing the order of the basis vectors when constructing $\obs{T}$. However, unlike the choices of the values $s_{R,k}$, there is a finite number of such choices, given by the permutations.
But I will not use my own construction of a TPS in this analysis, to avoid the trap of picking a bad example.
\end{remark}

The construction from Soulas et al. goes straight to the construction of the TPS, without an explicit construction of a basis, although their method to label the vectors by polynomials defines an implicit special basis of $\hilbert$.

Once we have a preferred basis, we can construct various preferred structures, not just a preferred TPS, for example a preferred basis of pointer states for the subsystems that decohere or perform measurements.

Such a construction is perfectly invariant under the conditions~\eqref{eq:soulas-conditions}, and \emph{assuming fixed $\oper{H}$ and $\ket{\psi}$}.
And it doesn't contradict the results from \citep{Stoica2022SpaceThePreferredBasisCannotUniquelyEmergeFromTheQuantumStructure}.

\section{Is Soulas et al.'s TPS unique \textit{and} physically relevant?}
\label{s:TPS-soulas-nogo}

We finally reached the point where we can analyze how Soulas et al.'s construction illustrates the obstructions to emergent structures brought by physical evidence.

\subsection{Is Soulas et al.'s TPS locked in relation with the state vector?}
\label{s:TPS-soulas-locking}

Relations are a two-way street: if $\ket{\psi}$ constrains the TPS based on fixing the values $s_{R,k}$, the resulting TPS ``reacts'' by constraining, in its turn, the von Neumann entropies of $\ket{\psi}$.
This raises the question
\begin{question}
\label{q:TPS-soulas-locking}
Is the TPS constructed in \citep{SoulasFranzmannDiBiagio2025OnTheEmergenceOfPreferredStructuresInQuantumTheoryV2} in a locked relation with the input state vector $\ket{\psi}$ determining it?
\end{question}

The construction uses the input state vector $\ket{\psi}$ to fix the relation between the invariants $s_{R,k}$ specifying the von Neumann entropies to the states whose subsystems have those entropies.
This implies that if we start the construction using as input another state vector $\ket{\psi'}$, we may get a different TPS $\mc{T}'$ from the TPS $\mc{T}$ constructed starting from $\ket{\psi}$,
\begin{equation}
\label{eq:Hv2tps-soulas-two-states}
\tau(\oper{H},{\color{red}\ket{\psi}},d_1,d_2,\ldots,d_n,s_{R,k})\neq \tau(\oper{H},{\color{red}\ket{\psi'}},d_1,d_2,\ldots,d_n,s_{R,k}).
\end{equation}

If we start the construction with $\ket{\psi}$, any vector $\ket{\psi'}$ is obtained from $\ket{\psi}$ and $\oper{H}$ with the help of a polynomial with complex coefficients $R\in\C[X]$ by $\ket{\psi'}=R\(\oper{H}\)\ket{\psi}$.
The von Neumann entropy of the $k$-th subsystem of $\ket{\psi}$ in the TPS $\mc{T}$ is $s_{1,k}$, because  $\ket{\psi}=R_0\(\oper{H}\)\ket{\psi}$, where $R_0(X)=1$ is a constant polynomial (its degree is $0$).
For another unit vector $\ket{\psi'}$, there is another polynomial $R\in\C[X]$ so that $\ket{\psi'}=R\(\oper{H}\)\ket{\psi}$.
Then, the von Neumann entropy of the $k$-th subsystem of $\ket{\psi'}$ with respect to the TPS $\mc{T}$ has to be $s_{R,k}$.

But if we start the construction from $\ket{\psi'}$, using the same specifications of the von Neumann entropies, we get another TPS $\mc{T}'=\tau(\oper{H},\ket{\psi'},d_1,d_2,\ldots,d_n,s_{R,k})$.
In this TPS, the von Neumann entropies of the subsystems of the input vector $\ket{\psi'}$ are $s_{1,k}$, the same as those of the subsystems of $\ket{\psi}$ in the TPS $\mc{T}$, but different from those of $\ket{\psi'}$ in $\mc{T}$, which are $s_{R,k}$.

So, while the TPS from \citep{SoulasFranzmannDiBiagio2025OnTheEmergenceOfPreferredStructuresInQuantumTheoryV2} is determined uniquely by $\oper{H}$ and $\ket{\psi}$ and $s_{R,k}$, we get a different TPS if we start from another input vector and use the same values $s_{R,k}$.

In other words, the invariant property $P_L$ introduced in \citep{SoulasFranzmannDiBiagio2025OnTheEmergenceOfPreferredStructuresInQuantumTheoryV2} determines the von Neumann entropies of the subsystems in the world to be $s_{1,k}$, regardless of the state of the world.
Whichever state of the world, it has the same von Neumann entropies $s_{1,k}$, simply because its TPS is different, as in equation~\eqref{eq:Hv2tps-soulas-two-states}.

One may object that if the world were in another state $\ket{\psi'}$, the construction of the TPS shouldn't have started with $\ket{\psi'}$, but with $\ket{\psi}$, always with $\ket{\psi}$, and then for $\ket{\psi'}$ we would get different entropies $s_{R,k}$, where $\ket{\psi'}=R\(\oper{H}\)\ket{\psi}$, and the problem is avoided. But there is only one fundamental state vector in \HSF, not two. In the original \HSF, one assumes as fundamental only the triple $\(\hilbert,\oper{H},\ket{\psi}\)$ where $\ket{\psi}$ represents the world. The same state vector $\ket{\psi}$ should be used as an input for any emergent structure. Having the world in a state $\ket{\psi'}$ but using as fundamental input for the TPS another state vector $\ket{\psi}$ doesn't sound like what \HSF claims.

But let's say, for the sake of the argument, that it complies with a weaker version, a modification of \HSF:

\setdefCustomtag{HSF$(s_{R,k},\ket{\psi_{\textnormal{TPS}}})$}
\begin{defCustom}[\HSF $+$ locked TPS]
\label{def:HSF-psiTPSs}
Everything about the world is encoded in
\begin{equation}
\label{eq:HSF-psiTPSs}
\(\hilbert,\oper{H},\ket{\psi}{\color{red},s_{R,k},\ket{\psi_{\textnormal{TPS}}}}\)
\end{equation}
where $\ket{\psi_{\textnormal{TPS}}}$ is the input vector for the TPS, and $\ket{\psi}$ is the actual state vector of the world.
\end{defCustom}

This goes against \HSF, so we may want to avoid using in the modified \HSF another unit vector $\ket{\psi_{\textnormal{TPS}}}$ besides $\ket{\psi}$.
But the alternative is to make $s_{R,k}$ depend of $\ket{\psi}$. Moreover, $s_{R,k}$ should depend of $\ket{\psi}$ in the exact right way so that, no matter which state vector $\ket{\psi}$ we use as an input, the result is the same. So the alternative to \ref{def:HSF-psiTPSs} is another modification of \HSF, which can be stated in the following way:
\setdefCustomtag{HSF$(s_{R,k}(\ket{\psi}))$}
\begin{defCustom}[\HSF $+$ state-dependent invariants]
\label{def:HSF-s-of-psi}
Everything about the world is encoded in
\begin{equation}
\label{eq:HSF-s-of-psi}
\(\hilbert,\oper{H},\ket{\psi}{\color{red},s_{R,k}(\ket{\psi})}\)
\end{equation}
where the invariants $s_{R,k}$ depend (!) on the state vector of the world $\ket{\psi}$.
\end{defCustom}

Then, this modification of \HSF uses a funny notion of invariants that depend of time.

Let's see where we arrived.

First, is there any difference between~\ref{def:HSF-psiTPSs} or~\ref{def:HSF-s-of-psi} and specifying the TPS directly, as in quantum theory? Or as in the following modification of \HSF that includes the TPS?
\setdefCustomtag{HSF+TPS}
\begin{defCustom}
\label{def:HSF_TPS}
Everything about the world is encoded in
\begin{equation}
\label{eq:HSF_TPS}
\(\hilbert,\oper{H},\ket{\psi}{\color{red},\mc{T}}\)
\end{equation}
where $\mc{T}$ is the TPS.
\end{defCustom}

If~\ref{def:HSF_TPS} is redundant, as the original \HSF claims,~\ref{def:HSF-psiTPSs} is more redundant. We clearly need less information to specify the TPS directly than indirectly by specifying the input vector $\ket{\psi_{\textnormal{TPS}}}$ and the values $s_{R,k}$.
The reason is that the same TPS can be specified using as input vector any unit vector from $\hilbert$ and the adapted von Neumann entropies $s_{R,k}$.

And~\ref{def:HSF-s-of-psi} is more redundant even than~\ref{def:HSF-psiTPSs}, because now the values $s_{R,k}(\ket{\psi})$ have to be chosen so that we get the same TPS regardless of the input vector $\ket{\psi}$.
So the values of $s_{R,k}(\ket{\psi})$ have to compensate for the changes of $\ket{\psi}$.

The following subsection reveals similar problems when time evolution is taken into account.

\subsection{Soulas et al.'s TPS \vs the time evolution of entanglement}
\label{s:TPS-soulas-time}

In quantum theory, the von Neumann entropy of subsystems can change in time. Subsystems that are in a separable state can become entangled.

\begin{question}
\label{q:TPS-soulas-time}
Does the TPS constructed in \citep{SoulasFranzmannDiBiagio2025OnTheEmergenceOfPreferredStructuresInQuantumTheoryV2} allow entanglement to evolve?
\end{question}

If we use the same TPS at all times, entanglement is allowed to change. But the construction from \citep{SoulasFranzmannDiBiagio2025OnTheEmergenceOfPreferredStructuresInQuantumTheoryV2} depends on the input vector $\ket{\psi(t)}$, which depends on time.
That is, if we do the construction at different times, we may get different TPSs $\mc{T}(t_1)\neq\mc{T}(t_2)$, because
\begin{equation}
\label{eq:Hv2tps-soulas-two-times}
\tau(\oper{H},{\color{red}\ket{\psi(t_1)}},d_1,d_2,\ldots,d_n,s_{R,k})\neq \tau(\oper{H},{\color{red}\ket{\psi(t_2)}},d_1,d_2,\ldots,d_n,s_{R,k})
\end{equation}

This should not happen, because in quantum theory in the Schr\"odinger picture, as used here, the TPS doesn't change in time.

One may think that, since the time evolution operator $e^{-\frac{i}{\hbar}\oper{H}(t_2-t_1)}$ evolves the state $\ket{\psi(t_1)}$ into $\ket{\psi(t_2)}=e^{-\frac{i}{\hbar}\oper{H}(t_2-t_1)}\ket{\psi(t_1)}$, preserving both $\oper{H}$ and all $\abs{\braket{\omega_k}{\psi(t)}}^2$ in the process, the resulting TPS must be independent of the time-dependent state vector $\ket{\psi(t)}$.
In other words, that it doesn't matter whether we use as an input vector $\ket{\psi(t_1)}$ or $e^{-\frac{i}{\hbar}\oper{H}(t_2-t_1)}\ket{\psi(t_1)}$.
However, this is not always possible, because in general $\ket{\psi(t_2)}\neq\ket{\psi(t_1)}$, so the resulting TPSs will be different as in equation~\eqref{eq:Hv2tps-soulas-two-times}.
It is possible that, for special situations, the TPS $\mc{T}(t)$ is time-independent, but it's not desirable, because this blocks the evolution of entanglement. Here is why.

\begin{remark}
\label{rem:no-interaction}
Suppose that Soulas et al's construction yields a time-independent TPS, in the sense that the TPS is independent of $\ket{\psi(t)}$ for the same set of invariants $s_{R,k}$. This is possible only if the invariants $s_{R,k}$ select a TPS in which $e^{-\frac{i}{\hbar}\oper{H}t}=e^{-\frac{i}{\hbar}\oper{H}_1t}\cdot\ldots\cdot e^{-\frac{i}{\hbar}\oper{H}_Nt}$, where $e^{-\frac{i}{\hbar}\oper{H}_k t}$ is a local unitary operator on $\hilbert_k$. In other words, $\oper{H}$ can be expressed in that TPS as the Kronecker sum of local Hermitian operators, $\oper{H}=\oper{H}_1\otimes\oper{I}_2\otimes\ldots+\oper{I}_1\otimes\oper{H}_2\otimes\ldots+$. This is possible only if the spectrum of $\oper{H}$ consists of all possible products of  the spectra of some local operators $\oper{H}_k$.
For such a Hamiltonian and TPS, entanglement doesn't change in time and there are even no interactions. So such a TPS would not be suited for real-world quantum theory, which allows both interactions and entanglement changes in time.
For more details see \citep[section~III]{Stoica2024DoesTheHamiltonianDetermineTheTPSAndThe3dSpace} or the version of this proof given in \citep[Proposition~2.13]{SoulasFranzmannDiBiagio2025OnTheEmergenceOfPreferredStructuresInQuantumTheoryV2}.
\end{remark}

So let's focus now on the solutions $\mc{T}(t)$ that can change in time while keeping $s_{R,k}$ fixed, as in equation~\eqref{eq:Hv2tps-soulas-two-times}.
If $s_{R,k}$ are the same whenever we construct the TPS, the von Neumann entropies of $\ket{\psi(t_1)}$ in the TPS $\mc{T}_1$ equal the von Neumann entropies of $\ket{\psi(t_2)}$ in the TPS $\mc{T}_2$,
\begin{equation}
\label{eq:Hv2tps-soulas-two-times-entropies}
S\(\tr^{\mc{T}_1}_k\dyad{\psi(t_1)}\)=S\(\tr^{\mc{T}_2}_k\dyad{\psi(t_2)}\)=s_{R,k}.
\end{equation}

This is contradicted, again, by our real-world observations that entanglement can change.

To avoid this, we have two options.
The first option is to fix an absolute time $t_0$ and construct the TPS as in \citep{SoulasFranzmannDiBiagio2025OnTheEmergenceOfPreferredStructuresInQuantumTheoryV2}, based on the input vector $\ket{\psi(t_0)}$, and then declare the resulting TPS as definitive, absolute.
This can be expressed as another modification of \HSF:

\setdefCustomtag{HSF$(s_{R,k},t_0)$}
\begin{defCustom}[\HSF $+$ $s_{R,k}$ $+$ absolute time]
\label{def:HSF-t-abs}
Everything about the world is encoded in
\begin{equation}
\label{eq:HSF-t-abs}
\(\hilbert,\oper{H},\ket{\psi(t)}{\color{red},s_{R,k},t_0}\)
\end{equation}
where $t_0$ is the time when the TPS emerges, using $\ket{\psi(t_0)}$ as input vector. At any time, the TPS should be the one constructed for $\ket{\psi(t_0)}$ and $s_{R,k}$.
\end{defCustom}

This makes the TPS consistent with our observations that the TPS is fixed and entanglement can change in time. 
But \ref{def:HSF-t-abs} has the disadvantage of relying on an absolute moment of time. Such an assumption was never a part of \HSF, and it can hardly be considered invariant.

Moreover, since \ref{def:HSF-t-abs} is a particular case of \ref{def:HSF-psiTPSs}, where $\ket{\psi_{\textnormal{TPS}}}=\ket{\psi(t_0)}$, it has the same problem as~\ref{def:HSF-psiTPSs}: it is just~\ref{def:HSF_TPS} but with a more complicated way to fix the TPS by hand.

The alternative, the other possibility to make the constructed TPS independent of time, is to extend it to all moments of time. But to obtain the same TPS, we have to allow $s_{R,k}$ to depend of time, to avoid freezing the entanglement as in equation~\eqref{eq:Hv2tps-soulas-two-times-entropies}.
In this case, the invariants $s_{R,k}$ not only should change in time, but they should change in the right way so that at any time we obtain the same TPS. In other words, the invariants $s_{R,k}$ should change exactly as needed to compensate the time evolution of $\ket{\psi(t)}$.
Instead of the constraints specified by equation~\eqref{eq:sRi}, we use time-dependent constraints,
\begin{equation}
\label{eq:sRi-t}
\(R,k,{\color{red}t}\)\mapsto s_{R,k}{\color{red}(t)}\in[0,\ln d_k],
\end{equation}
and constrain the constraints themselves so that, \emph{as the system evolves in time, the ``invariant constraints'' $s_{R,k}$ also change} to generate the same TPS at all times.

This gives us another modification of \HSF:
\setdefCustomtag{HSF$(s_{R,k}(t))$}
\begin{defCustom}[\HSF $+$ time-dependent $s_{R,k}$]
\label{def:HSF-t}
Everything about the world is encoded in
\begin{equation}
\label{eq:HSF-t}
\(\hilbert,\oper{H},\ket{\psi(t)}{\color{red},s_{R,k}(t)}\)
\end{equation}
using $\ket{\psi(t)}$ and $s_{R,k}(t)$ as inputs at each time $t$,
where all $s_{R,k}(t)$ depend on time in the exact way needed to compensate the time evolution of $\ket{\psi(t)}$, so that we obtain a time-independent TPS.
\end{defCustom}

But then, again, what does~\ref{def:HSF-t} do better than simply~\ref{def:HSF_TPS}? Isn't it just a very complicated way to fix the TPS, which is time-dependent but compensated to give the same TPS at all times?
And how is this construction invariant?

Note that I don't mean to suggest that any of these modifications of \HSF were even proposed in \citep{SoulasFranzmannDiBiagio2025OnTheEmergenceOfPreferredStructuresInQuantumTheoryV2}. They seem to believe that their construction is within \HSF and that it gives a physically valid TPS for all situations. But this is because they didn't test their construction against quantum theory.
The moment you test the TPS to see if it satisfies minimal conditions like allowing interactions and allowing changes in the entanglement, the construction doesn't hold, and you may try to adjust it. I just listed the alternatives.

\subsection{What to sacrifice: invariance, uniqueness, or physical relevance?}
\label{s:TPS-soulas-trilemma}

As noted above, \citep{SoulasFranzmannDiBiagio2025OnTheEmergenceOfPreferredStructuresInQuantumTheoryV2} present their construction as part of \HSF, not as a modification like~\ref{def:HSF-psiTPSs} and~\ref{def:HSF-s-of-psi}, or~\ref{def:HSF-t-abs} and~\ref{def:HSF-t}. Let me insist that I don't try to attribute to Soulas et al. these modifications of \HSF as a straw man, to win an argument; what I say is that they should have tested their TPS in these situations, and if they would do it, this is what they would get.
But after doing their construction they moved to the Conclusion section, leaving unanswered (and unasked) questions that, if we attempt to answer, we realize that bare \HSF is insufficient. The modifications of \HSF mentioned above are just the alternatives to which asking these questions lead.

Let us review our previous observations.
The first observation is that, as we try to make the TPS accommodate minimal realistic requirements like allowing interactions and allowing changes in entanglement, we realize that we are forced to choose one of the following:
\begin{enumerate}
	\item 
The construction of the TPS has to depend of a fixed, absolute time $t_0$, as in \ref{def:HSF-t-abs}. More precisely, the construction has to depend of $\ket{\psi(t_0)}$, which depends on $t_0$.
	\item 
The construction of the TPS, \emph{including the invariants $s_{R,k}$}, has to depend of time, as in \ref{def:HSF-t} (again, because $\ket{\psi(t)}$ depends on time).
\end{enumerate}

Can any of these options be considered an invariant construction?
One may call the values $s_{R,k}$ invariants, but are they invariant, if they either depend of an absolute time $t_0$, or if they have to change in time in the exact right way to compensate for the changes of the state vector just enough to get the same TPS at all times?

Time translations are among the symmetries of a quantum system, and therefore an invariant should be also invariant to the time translation symmetries. 
Seen as such, the TPS construction from Soulas et al. can't really be invariant, neither the values $s_{R,k}$ deserve to be called like this.
They are invariants if we consider $(\hilbert,\oper{H},\ket{\psi})$ with a fixed $\ket{\psi}$, but $\ket{\psi}$ is in fact time-dependent, otherwise \HSF doesn't describe a world in which there is time.

\begin{question}
\label{q:soulas-TPS-time-translation}
Can we restore time translation symmetry in Soulas et al.'s construction?
\end{question}

Let's start with \HSF with an absolute time $t_0$, as in \ref{def:HSF-t-abs}.
In this formulation, at any time $t$, we consider as TPS the one obtained at the time $t_0$, that is, using as an input vector $\ket{\psi(t_0)}$.
To restore the time symmetry, we have to allow $t_0$ to take any value.
But then, to get a unique TPS, we need to also allow the invariants $s_{R,k}$ to depend of $t_0$, which results in \ref{def:HSF-t}.
If we keep the invariants $s_{R,k}$ invariant in time as well, then the resulting TPSs are different for different input times $t_0$.
Then, the TPS is not unique, since at any time we have all the TPSs obtained at all other times, but at least now the construction is time-independent.

If, on the other hand, we start with \HSF with all times, as in \ref{def:HSF-t}, the solution is again not time-independent, since the invariants $s_{R,k}$ depend of time. To make the construction time-independent, we have to allow time translations, which amounts to shifting $s_{R,k}(t)$ with any possible time interval $\Delta t\in\R$, so we have to accept all solutions $s_{R,k}(t+\Delta t)$.
But shifting $s_{R,k}(t)\mapsto s_{R,k}(t+\Delta t)$ for the same $\ket{\psi(t)}$ results in infinitely many TPSs, because entanglement entropies change continuously in time. For more details, see \citep{Stoica2024DoesTheHamiltonianDetermineTheTPSAndThe3dSpace}, reprised in \citep[Proposition~2.13]{SoulasFranzmannDiBiagio2025OnTheEmergenceOfPreferredStructuresInQuantumTheoryV2}.
So then again the solution is not unique, but at least this construction is time-independent.

Notice how, regardless whether we started with \ref{def:HSF-t-abs} or \ref{def:HSF-t}, making the construction time-independent leads to the same non-uniqueness of the TPS.

Non-uniqueness is fatal, because if there are more TPSs, there is no objective way to define the von Neumann entropies of the subsystems for the state vector $\ket{\psi(t)}$ at the current time. And infinitely many TPSs are obtained by using $\ket{\psi(t_0)}$ for all possible times $t_0$ (respectively by using $\ket{\psi(t+\Delta t)}$ for all times $t+\Delta t$).

To summarize, if the construction of the TPS is invariant (including time-independent),
\begin{enumerate}
	\item If it results in a unique TPS, that TPS doesn't allow the von Neumann entropies change in time, so separable state can't become entangled (Remark~\ref{rem:no-interaction}). Also there are no interactions.
	Alternatively, we can try to fix a TPS once and for all, which again would violate the invariance of the construction.
	\item If the solution is not unique, our many TPSs don't allow us to say anything about the von Neumann entropies, since the answer depends on the TPS. Also we can't say anything about subsystems and their interactions, since they depend of the TPS too.
\end{enumerate}

This leaves us with a trilemma:

\begin{question}
\label{q:TPS-soulas-trilemma}
Can the construction from \citep{SoulasFranzmannDiBiagio2025OnTheEmergenceOfPreferredStructuresInQuantumTheoryV2} be at the same time invariant, unique, and physically relevant?
\end{question}

And this catch-22 is what \thmStoica from \citep{Stoica2022SpaceThePreferredBasisCannotUniquelyEmergeFromTheQuantumStructure} says happens. Not only for the TPS, but for any structure hoped to emerge in \HSF, if that structure should be capable of yielding observable changes in the system (like the TPS with entanglement, 3D space with the distribution of matter, \etc.).

\section{What's (Stoica, 2022)'s no-go theorem really about?}
\label{s:stoica-all-about}

\citep{SoulasFranzmannDiBiagio2025OnTheEmergenceOfPreferredStructuresInQuantumTheoryV2}'s review of the results from \citep{Stoica2022SpaceThePreferredBasisCannotUniquelyEmergeFromTheQuantumStructure} is careful and extensive, but it may leave a wrong impression about those results.
After I explain the main result from \citep{Stoica2022SpaceThePreferredBasisCannotUniquelyEmergeFromTheQuantumStructure}, I compare the original source with the way it was understood by Soulas et al.
I also explain how Soulas et al.'s proposed generalization is a particular case of these results.

\subsection{The main idea of (Stoica, 2022)'s no-go theorem}
\label{s:stoica-brief}

The structures that we use in quantum theory are in various relations with each other. These relations can be time-independent, or they can change in time.

For example, $\oper{H}$ and $\ket{\psi(t)}$ are in a time-independent relation, because we can't observe changing differences between them.
To see this, let the Hamiltonian be
\begin{equation}
\label{eq:H-eigen-decomp}
\oper{H}=\sum_{\omega}\omega\oper{P}_{\omega},
\end{equation}
where $\omega$ are the eigenvalues of $\oper{H}$ and $\oper{P}_{\omega}$ are the projections on the eigenspaces of $\oper{H}$.
Then,
\begin{equation}
\label{eq:psi-H-relation}
\ket{\psi(t)}=\sum_{\omega}\oper{P}_{\omega}\ket{\psi(t)}.
\end{equation}

From equation~\eqref{eq:time-evol} we get, for each $\omega$,
\begin{equation}
\label{eq:psi-H-relation-omega-invariant}
\oper{P}_{\omega}\ket{\psi(t)}=\oper{P}_{\omega}e^{-\frac{i}{\hbar}\oper{H}t}\ket{\psi(0)}=e^{-\frac{i}{\hbar}\omega t}\oper{P}_{\omega}\ket{\psi(0)}.
\end{equation}

Then, for each $\omega$,
\begin{equation}
\label{eq:psi-H-expval}
\expval{\oper{P}_{\omega}}{\psi(t)}=\expval{\oper{P}_{\omega}}{\psi(0)}.
\end{equation}

So $\ket{\psi(t)}$ doesn't change in time with respect to $\oper{H}$ in an observable way. The lengths of the projections $\oper{P}_{\omega}\ket{\psi(t)}$ are time-independent. The phases $e^{-\frac{i}{\hbar}\omega t}$ change at different rates for different eigenvalues $\omega$, but they are not observable.

On the other hand, $\ket{\psi(t)}$ changes in time with respect to other structures. It changes with respect to space, since $\psi(\x,t)=\braket{\x}{\psi(t)}$ changes. It also changes with respect to the TPS, because the subsystems of $\ket{\psi(t)}$ can interact and can change their entanglement.

So a TPS or an emergent space that are not in a time-dependent relation with $\ket{\psi(t)}$ don't have observable effects.
This is why, in \citep{Stoica2022SpaceThePreferredBasisCannotUniquelyEmergeFromTheQuantumStructure}, I called structures that are in an observable time-dependent relation with $\ket{\psi(t)}$ ``physically relevant''.

\thmStoica from \citep{Stoica2022SpaceThePreferredBasisCannotUniquelyEmergeFromTheQuantumStructure} is about the emergence of physically relevant structures $\mc{S}$, \ie whose relation with $\ket{\psi(t)}$ can be different, in particular it can change in time. 

We've seen that, based on the time-dependence of the relation.
\begin{enumerate}
	\item[R1.] $\oper{H}$ and $\ket{\psi(t)}$ are in a time-independent relation.
	\item[R2.] $\mc{S}$ and $\ket{\psi(t)}$ are in a time-dependent relation.
\end{enumerate}

But how can the relation between $\mc{S}$ and $(\oper{H},\ket{\psi(t)})$ be?
Suppose that $\mc{S}$ emerges from $(\oper{H},\ket{\psi(0)})$, \emph{uniquely up to physically unobservable differences}.
Let's denote this by
\begin{equation}
\label{eq:unique-emergence}
(\oper{H},\ket{\psi(0)})\stackrel{!}{\longrightarrow}\mc{S}.
\end{equation}

Then, if the construction of the emergent structure $\mc{S}$ is invariant, a unitary transformation $\oper{U}$ of $\hilbert$ should preserve this construction, and so, from $(\oper{U}\oper{H}\oper{U}^\dagger,\oper{U}\ket{\psi(0)})$, a structure of the same kind as $\mc{S}$, denoted by $\oper{U}\cdot\mc{S}$, should emerge uniquely
\begin{equation}
\label{eq:emergent-invariant-relation}
(\oper{U}\oper{H}\oper{U}^\dagger,\oper{U}\ket{\psi(0)})\stackrel{!}{\longrightarrow}\oper{U}\cdot\mc{S}.
\end{equation}

Therefore,
\begin{enumerate}
	\item[R3.] If $(\oper{H},\ket{\psi(0)})\stackrel{!}{\longrightarrow}\mc{S}$, $\mc{S}$ and $(\oper{H},\ket{\psi(0)})$ are in a unitary-independent relation.
\end{enumerate}

The main theorem from \citep{Stoica2022SpaceThePreferredBasisCannotUniquelyEmergeFromTheQuantumStructure} is:

\setdefCustomtag{Theorem~2}
\begin{defCustom}[\citealp{Stoica2022SpaceThePreferredBasisCannotUniquelyEmergeFromTheQuantumStructure}]
\label{thm:theoremtwo}
If a $\mc{K}$-structure is physically relevant, then it is not (essentially) unique.
\end{defCustom}

Here ``essentially unique'' means that it emerges uniquely up to physically unobservable differences as in equation~\eqref{eq:emergent-invariant-relation}. A $\mc{K}$-structure is a structure $\mc{S}$ of a particular kind $\mc{K}$, which can be TPS, 3D space, pointer basis \etc., along with some invariant conditions that are used in the construction of $\mc{S}$.
Therefore, the theorem can be formulated as ``a structure $\mc{S}$ in a time-dependent relation with $\ket{\psi(t)}$ can't emerge uniquely from $\oper{H}$ and $\ket{\psi(t)}$.''

\begin{proof}[Proof of \thmStoica, main idea]
In particular, equation~\eqref{eq:emergent-invariant-relation} should be true even if we apply as our symmetry transformation a unitary transformation of the form $\oper{U}_t=e^{-\frac{i}{\hbar}\oper{H}t}$, so
\begin{equation}
\label{eq:emergent-invariant-relation-time}
(\oper{H},e^{-\frac{i}{\hbar}\oper{H}t}\ket{\psi(0)})\stackrel{!}{\longrightarrow}e^{-\frac{i}{\hbar}\oper{H}t}\cdot\mc{S}.
\end{equation}

But since the symmetry transformation $e^{-\frac{i}{\hbar}\oper{H}t}$ also plays the role of time evolution operator, uniqueness (up to physically unobservable differences) implies that $\mc{S}$ and $(\oper{H},\ket{\psi(t)})$ are in a time-independent relation,
\begin{enumerate}
	\item[R3$'$.] If $(\oper{H},\ket{\psi(0)})\stackrel{!}{\longrightarrow}\mc{S}$, $\mc{S}$ and $(\oper{H},\ket{\psi(t)})$ are in a time-independent relation.
\end{enumerate}

Since a structure in a time-independent relation with a pair of structures is also in a time-independent relation with each of these structures, we get
\begin{enumerate}
	\item[$\neg$R2.] If $(\oper{H},\ket{\psi(0)})\stackrel{!}{\longrightarrow}\mc{S}$, $\mc{S}$ and $\ket{\psi(t)}$ are in a time-independent relation.
\end{enumerate}

Therefore, we derived a contradiction between R2 and R3. This proves the theorem.
\end{proof}

Notice that $\oper{H}$ and $\ket{\psi(t)}$ are in a time-independent relation, and yet we get to know the Hamiltonian. How can this be possible if there are no observable changes in the relation between $\oper{H}$ and $\ket{\psi(t)}$? The answer is, of course, that we ``read'' the Hamiltonian from the way the wavefunction $\psi(\x,t)=\braket{\x}{\psi(t)}$ changes in time with respect to space and the TPS, not by accessing directly the Hamiltonian or its relation with the unit vector $\ket{\psi(t)}$.

An obvious question is
``if we give up uniqueness, can a physically relevant structure emerge?''

Suppose we give up uniqueness up to physically unobservable differences, replacing it with a weaker equivalence relation ``$\approx$''.
Then, two emergent structures $\mc{S}\approx\mc{S}'$ will give different descriptions of the world.
For example, two emergent TPSs $\mc{T}\approx\mc{T}'$ will give different von Neumann entropies (see Proposition~\ref{thm:TPS-physical}).
Two 3D spaces $\mc{X}\approx\mc{X}'$ will give different distributions of matter in space. 
So the description provided by a structure unique only up to ``$\approx$'' is ambiguous.
Moreover, since the notion of physical uniqueness that we wanted to relax was contradicted by the necessity that the relation between $\mc{S}$ and $\ket{\psi(t)}$ is time-dependent, the weaker equivalence relation ``$\approx$'' has to ignore these differences. In other words, our emergent structure will be so ambiguous that it wouldn't be able to describe change, this time precisely because it is too ambiguous.

Moreover, if our construction is invariant, using the same transformation~\eqref{eq:emergent-invariant-relation-time} it follows that our construction gives at all times the exact same set of emergent structures.
This set give contradictory descriptions of the world, including describing the world at a given time as it is at any other time.

So both uniqueness up to physically unobservable differences and uniqueness up to a weaker equivalence relation ``$\approx$'' have the same consequence that our emergent structure can't describe changes. 
And this is confirmed by the result from subsection~\sref{s:intro:HSF-no}.

And, in section~\sref{s:TPS-soulas-nogo}, we've seen how Soulas et al.'s construction of a TPS illustrates \thmStoica, rather than contradicting it.

\subsection{Does (Stoica, 2022) claim that uniqueness is impossible?}
\label{s:soulas-claim-no-uniqueness}

If a theorem says ``P and Q can't both be true'', or ``If P is true, Q is false'', can it be refuted with an example that shows that Q is true, but says nothing about P? Obviously not. If Q is true, this is perfectly consistent with the theorem, if also P is false. To refute such a theorem, one should give an example that both P and Q are true.

In \thmStoica from \citep{Stoica2022SpaceThePreferredBasisCannotUniquelyEmergeFromTheQuantumStructure}, P=''the emergent structure is physically relevant'' and Q=''the emergent structure is unique''.

However, \citep{SoulasFranzmannDiBiagio2025OnTheEmergenceOfPreferredStructuresInQuantumTheoryV2} think that \citep{Stoica2022SpaceThePreferredBasisCannotUniquelyEmergeFromTheQuantumStructure} says ``Q is false'':

\begin{quote}
we can finally turn back to the initial problem and prove that a pair $(\oper{H},\ket{\psi})$ can uniquely determine a TPS, contradicting one of
Stoica's strongest claims [27] (section 3.3).
\end{quote}

Later, in a talk, Soulas said ``Stoica's strongest claim: no unitary-invariant property on $(\oper{H},\ket{\psi})$ can uniquely determine a structure'' \citep[min. 60]{Soulas2026OnTheEmergenceOfPreferredStructuresInQuantumTheoryTalk}.

In fact, \citep[\ref{remark:unique-exist}]{Stoica2022SpaceThePreferredBasisCannotUniquelyEmergeFromTheQuantumStructure} clearly states that it's perfectly possible to construct unique structures from $\oper{H}$ and $\ket{\psi}$, and then gives some examples:
\setdefCustomtag{Remark~15}
\begin{defCustom}[from \citealp{Stoica2022SpaceThePreferredBasisCannotUniquelyEmergeFromTheQuantumStructure}]
\label{remark:unique-exist}
There are ways to construct structures that depend on the
MQS {\normalfont[\ie $(\hilbert,\oper{H},\ket{\psi})$]} alone and are unique, but they all violate the Physical Relevance \ref{condition-relevance}. Such examples include... {\normalfont[several examples]}
\end{defCustom}

So constructing a unique TPS doesn't, by itself, contradict \citep{Stoica2022SpaceThePreferredBasisCannotUniquelyEmergeFromTheQuantumStructure}, contrary to the claims made by Soulas et al.
To be a counterexample, the TPS should be shown to satisfy both conditions that the theorem says can't both be true. And also the construction has to be invariant, since this is assumed by the theorem.

\subsection{Is (Stoica, 2022) limited to strict, absolute uniqueness?}
\label{s:stoica-not-just-uniqueness}

Soulas et al. also stated that \citep{Stoica2022SpaceThePreferredBasisCannotUniquelyEmergeFromTheQuantumStructure} talks about uniqueness in an absolute sense.
For example, here is what they write in \citep[section~\S3.1]{SoulasFranzmannDiBiagio2025OnTheEmergenceOfPreferredStructuresInQuantumTheoryV2}:
\begin{quote}
In this absolute perspective, there is a continuous infinity of different ways to choose a unit vector in a bare Hilbert space $\hilbert$; whereas in the previous relational perspective, there is only one way to choose it.
\end{quote}

Nowhere in \citep{Stoica2022SpaceThePreferredBasisCannotUniquelyEmergeFromTheQuantumStructure} or elsewhere it is claimed, suggested, or assumed that in a bare Hilbert space there is a physically relevant difference between two unit vectors, just because they are different elements of $\hilbert$. Also the refutation of \HSF is not based on such an assumption, it is based on \emph{finding observable physical differences} between two otherwise isomorphic structures.

Here is another misattribution from \citep[section~\S3.1]{SoulasFranzmannDiBiagio2025OnTheEmergenceOfPreferredStructuresInQuantumTheoryV2}:
\begin{quote}
For Stoica, on the other hand, the structures are perceived in a more absolute sense: $\mc{T}$ and $e^{-i\oper{H}t}\cdot\mc{T}$ are seen as distinct TPSs, regardless of whether they can be distinguished
by some surrounding structure.
\end{quote}

But in \citep{Stoica2022SpaceThePreferredBasisCannotUniquelyEmergeFromTheQuantumStructure}, two TPSs (or other emerging structures) are \emph{not} considered distinct merely because another one can be obtained by a mere unitary transformation; they are distinct \emph{precisely when they can be distinguished physically}.
The symmetry transformation $e^{-i\oper{H}t}\cdot\mc{T}$ is used either to produce physically distinct structures that decode physically different descriptions of reality from $\ket{\psi}$, or to prove that the structure doesn't have different observable effects at different times.
All symmetries that don't result in physical differences are allowed as part of the definition of uniqueness, and no construction of an ``emergent structure'' is rejected just because it has distinct solutions without having distinct physical effects.

The essential keyword in \thmStoica is ``essentially unique'', \citep[page 3915]{Stoica2022SpaceThePreferredBasisCannotUniquelyEmergeFromTheQuantumStructure}:

\setdefCustomtag{Essential uniqueness}
\begin{defCustom}[From \citealp{Stoica2022SpaceThePreferredBasisCannotUniquelyEmergeFromTheQuantumStructure}, {\S}3.2.1.]
\label{quote-essential-uniqueness}
The first condition we will impose on a
$\mc{K}$-structure is to be \emph{unique} or \emph{essentially unique}. We will not require the structure to be necessarily unique, but we require at least that whenever such a structure is not unique, there are no physical differences between its instances.
\end{defCustom}

So in \citep{Stoica2022SpaceThePreferredBasisCannotUniquelyEmergeFromTheQuantumStructure} there is nothing like ``absolute uniqueness'', as claimed by Soulas et al.
I borrowed the term ``essentially unique'' from \citep{CarrollSingh2019MadDogEverettianism}, who used it without defining it to refer to the result from \citep{CotlerEtAl2019LocalityFromSpectrum}.
Since it wasn't clear what this term was supposed to mean, to cover all bases, I allowed ``essentially unique'' to mean ``equivalence up to physically unobservable differences'' (\citealp{Stoica2022SpaceThePreferredBasisCannotUniquelyEmergeFromTheQuantumStructure}, page 3916):

\setdefCustomtag{Condition~1}
\begin{defCustom}[``Essentially unique''-ness from \citealp{Stoica2022SpaceThePreferredBasisCannotUniquelyEmergeFromTheQuantumStructure}]
\label{condition-uniqueness}
Any two $\mc{K}$-structures of the same kind $\mc{K}$ are {\color{red}physically equivalent}.
\end{defCustom}

A nice consequence of this general notion of uniqueness is that the proof doesn't need to restrain physical equivalence to strict uniqueness, precisely because the same equivalence relation used to define ``essential uniqueness'' also appears in the definition of ``physical relevance''. This can be seen in the relation between the notions of time-dependence and physical uniqueness from subsection~\sref{s:stoica-brief}, and in \citep[page 3919]{Stoica2022SpaceThePreferredBasisCannotUniquelyEmergeFromTheQuantumStructure}:

\setdefCustomtag{Condition~2}
\begin{defCustom}[{\color{red}Physical relevance} from \citealp{Stoica2022SpaceThePreferredBasisCannotUniquelyEmergeFromTheQuantumStructure}]
\label{condition-relevance}
There exist at least two unit vectors $\ket{\psi}\nsim\ket{\psi'}\in\hilbert$ representing distinct physical states that are not distinguished by the Hamiltonian, and an invariant scalar function $\mc{I}$ of $\mc{S}_{\widehat{H}}^{\ket{\psi}}$ and $\ket{\psi}$ so that, for any elements $g,g'\in G_P$,
\begin{equation}
\label{eq:physical_relevance_invar_G}
\mc{I}\(\widehat{g}\[\mc{S}_{\widehat{H}}^{\ket{\psi}}\],\ket{\psi}\)
\neq\mc{I}\(\widehat{g'}\[\mc{S}{}_{\widehat{H}}^{\ket{\psi'}}\],\ket{\psi'}\).
\end{equation}
\end{defCustom}

Here, $G_P\subset\U(\hilbert)$ is a symmetry group responsible for physically unobservable differences, for example local gauge transformations. The notation $\mc{S}_{\widehat{H}}^{\ket{\psi}}$ translates into $(\widehat{H},\ket{\psi},\mc{S})$ in \citep{SoulasFranzmannDiBiagio2025OnTheEmergenceOfPreferredStructuresInQuantumTheoryV2}, and all elements transform in sync under a unitary symmetry.

To summarize,
\begin{enumerate}
	\item 
	The results from \citep{Stoica2022SpaceThePreferredBasisCannotUniquelyEmergeFromTheQuantumStructure} are valid even if we consider distinct TPSs as equivalent.
	\item 
	The story doesn't end if one constructs a unique preferred structure, since \thmStoica from \citep{Stoica2022SpaceThePreferredBasisCannotUniquelyEmergeFromTheQuantumStructure} doesn't claim that no unique preferred structure is possible, it claims that \emph{if} such structure is unique, \emph{then} it is not physically relevant.
\end{enumerate}

Then why did \citep{SoulasFranzmannDiBiagio2025OnTheEmergenceOfPreferredStructuresInQuantumTheoryV2} claim that the results from \citep{Stoica2022SpaceThePreferredBasisCannotUniquelyEmergeFromTheQuantumStructure} refer to absolute uniqueness, if this isn't true? In their defense, maybe they assumed that the result from \citep[\S{II}, Theorem 1]{Stoica2024DoesTheHamiltonianDetermineTheTPSAndThe3dSpace} is a restatement of \thmStoica from \citep{Stoica2022SpaceThePreferredBasisCannotUniquelyEmergeFromTheQuantumStructure}. Theorem~1 from the three-page paper \citep{Stoica2024DoesTheHamiltonianDetermineTheTPSAndThe3dSpace}, like Theorem~\ref{thm:no-unique-TPS-from-H} from this article, is about absolute uniqueness, and it relies only on $\oper{H}$, not on $\ket{\psi}$, so in this sense it's a weaker result than \thmStoica. It was intended to show how many different ways to choose the TPS based on only the Hamiltonian there are. And we've seen in Proposition~\ref{thm:TPS-physical} that physics requires strict uniqueness of the TPS.

However, even in that short paper, Section~\S{III} discusses general notions of equivalence:
\begin{quote}
even if we would hope to get rid of non-uniqueness by taking equivalence classes of the resulting TPSs, this would not make sense physically
\end{quote}
followed immediately by a proof that, regardless of the invariant method used to construct the TPS, no form of equivalence of TPSs which is consistent with entanglement entropy can allow entanglement entropy change in time.

\subsection{Intrinsic \vs extrinsic structures}
\label{s:intrinsic-vs-extrinsic}

In \citep{Stoica2022SpaceThePreferredBasisCannotUniquelyEmergeFromTheQuantumStructure} I treat the candidate preferred structures as consisting of tensors over $\hilbert$. This is as general as it gets, if we discuss intrinsic structures, structures constructable within $(\hilbert,\oper{H},\ket{\psi})$. Anything else are structures external to $(\hilbert,\oper{H},\ket{\psi})$, so they can't be a part of \HSF.

A candidate preferred structure has to satisfy some invariant constraints, so I defined the ``kind'' of the structure as a set of invariant conditions over the tensors constituting the structure. The unitary group $\U(\hilbert)$ acts on the structures that satisfy these constraints.
Its action are unitary transformations, not other kinds of transformations outside of $(\hilbert,\oper{H},\ket{\psi})$.

On the other hand, when \citep{SoulasFranzmannDiBiagio2025OnTheEmergenceOfPreferredStructuresInQuantumTheoryV2} reviewed \citep{Stoica2022SpaceThePreferredBasisCannotUniquelyEmergeFromTheQuantumStructure}, they gave a broader definition of the ``kind'', as a set on which $\U(\hilbert)$ acts. Their definition seems more general, in the sense that it can be applied to structures that involve more than the Hilbert space and its elements, and moreover, it allows a free specification of the action of $\U(\hilbert)$.
Their definition allows one to add any structure, even extrinsic ones, and to define freely the action of $\U(\hilbert)$ on that structure.

I think that, if a structure of a particular kind is to emerge from $(\hilbert,\oper{H},\ket{\psi})$, this has to be an intrinsic construction, so in any emergence program we better avoid anything that is not explicitly intrinsic.

But even the actions of $\U(\hilbert)$ on structures external to $(\hilbert,\oper{H},\ket{\psi})$ induce actions on its intrinsic structures, and these are the ones accessible to observations. So I see no problem in leaving unspecified the space of structures on which the group acts, as long as at the end of the construction, the resulting emergent structures are shown to be intrinsic. I see this neither as a problem, nor as a generalization, since ultimately the emergent structure has to be intrinsic.

\subsection{Soulas et al.'s ``relational uniqueness'' is covered by (Stoica, 2022)}
\label{s:soulas-no-generalize}

\citep{SoulasFranzmannDiBiagio2025OnTheEmergenceOfPreferredStructuresInQuantumTheoryV2} defined the product of kinds, to which they add sufficient constraints to determine the structure:

\setdefCustomtag{Soulas et al's ``relational uniqueness''}
\begin{defCustom}[\textbf{Definition 3.3} from \citealp{SoulasFranzmannDiBiagio2025OnTheEmergenceOfPreferredStructuresInQuantumTheoryV2}]
\label{def:soulas:relational-uniqueness}
Let $\mc{K}_0$ and $\mc{K}_e$ be two determined kinds, and $P$ a unitary invariant property on the product kind $\mc{K}_0\times\mc{K}_e$. We say that $P$ determines the product kind if $P$ holds on exactly one orbit in $\mc{K}_0\times\mc{K}_e$ under the action of $\U(\hilbert)$.
\end{defCustom}

Soulas et al. call it ``the correct way to formulate HSF''.
This may seem as different and more general, but it is perfectly contained in the framework from \citep{Stoica2022SpaceThePreferredBasisCannotUniquelyEmergeFromTheQuantumStructure}.

In \citep{Stoica2022SpaceThePreferredBasisCannotUniquelyEmergeFromTheQuantumStructure}, any kind $\mc{K}$ comes with its types of tensors over $\hilbert$, $\ms{T}_{\mc{K}}$, and with a set of invariant constraints $\ms{C}_{\mc{K}}$ that tensors of the specified types should satisfy to constitute a candidate emergent structure (Definition 3, page 3914).

So two kinds $\mc{K}_0$ and $\mc{K}_e$ come with their own types of tensors $\ms{T}_0$ and $\ms{T}_e$ and invariant constraints $\ms{C}_0$ and $\ms{C}_e$, and together they form a kind $\mc{K}_0\times\mc{K}_e$ of structures consisting of the union each set of types of tensors, $\ms{T}_{\mc{K}_0\times\mc{K}_e}=\ms{T}_0\cup \ms{T}_e$, which satisfy the total constraints from both kinds $\ms{C}_{\mc{K}_0\times\mc{K}_e}=\ms{C}_0\cup \ms{C}_e$.
The tensors of the specified types therefore combine in a Cartesian product, but we keep only those that satisfy the constraints $\ms{C}_{\mc{K}_0\times\mc{K}_e}$.
The framework from \citep{Stoica2022SpaceThePreferredBasisCannotUniquelyEmergeFromTheQuantumStructure} also allows one to add more constraints $\ms{C}_{+}$ to determine the kind, as proposed in \citep{SoulasFranzmannDiBiagio2025OnTheEmergenceOfPreferredStructuresInQuantumTheoryV2}.
It's perfectly within the framework from \citep{Stoica2022SpaceThePreferredBasisCannotUniquelyEmergeFromTheQuantumStructure} to extend the set of conditions from $\ms{C}_0\cup \ms{C}_e$ to $\ms{C}_0\cup \ms{C}_e\cup\ms{C}_{+}$. 

Soulas et al. contrasted the combined structure with the extra conditions that would make it unique in the sense they call ``relational'', with the framework from \citep{Stoica2022SpaceThePreferredBasisCannotUniquelyEmergeFromTheQuantumStructure}, but it was already covered by that framework.
\thmStoica applies to any kind, including the determined one (constructions that combine structures and select a unique relation between them, as in \ref{def:soulas:relational-uniqueness}), but also much looser and more general combinations of structures.

The following quote is from \citep[section~\S3.3]{SoulasFranzmannDiBiagio2025OnTheEmergenceOfPreferredStructuresInQuantumTheoryV2}:
\begin{quote}
By applying a symmetry of $\oper{H}$ to $(\oper{H},\ket{\psi}, \mc{S}_e)$, one obtains indeed a new triple $(\oper{H},\oper{U}\ket{\psi}, \oper{U}\cdot\mc{S}_e)$, but now the input structure has been modified, thus it does not prove that a \emph{fixed} $(\oper{H},\ket{\psi})$ can not determine uniquely $\mc{S}_e$ in the sense of corollary 3.7. In short, his approach does not treat $\oper{H}$ and $\ket{\psi}$ on an equal footing: the former is considered in the absolute sense, but the latter in the relational sense.
\end{quote}

In fact, \citep{Stoica2022SpaceThePreferredBasisCannotUniquelyEmergeFromTheQuantumStructure} \emph{does} treat $\oper{H}$, $\ket{\psi}$, and $\mc{S}$ on equal footing under symmetry transformations. They transform by the same symmetry transformations, as one can see in \ref{condition-relevance} or in equation~\eqref{eq:emergent-invariant-relation}. And both $\oper{H}$ and $\ket{\psi}$ can be used as input for the desired emergent structure.
The confusion that somehow \citep{Stoica2022SpaceThePreferredBasisCannotUniquelyEmergeFromTheQuantumStructure} treats $\oper{H}$ and $\ket{\psi}$ differently may come from the fact that unitary transformations are employed to prove, based on the time-dependence of the relation between $\ket{\psi}$ and $\mc{S}$, to find physically distinct situations, as in equation~\eqref{eq:physical_relevance_invar_G}.

\section{What should we gain from an emergent structure?}
\label{s:soulas-emergence-inconsequential}

In this section I discuss Soulas et al.'s method to construct an emergent structure based on specifying its invariants, trying to see what benefits it could bring.
This discussion is independent of the analysis from section~\ref{s:TPS-soulas-nogo}, which showed that it doesn't constitute a counterexample to \thmStoica from \citep{Stoica2022SpaceThePreferredBasisCannotUniquelyEmergeFromTheQuantumStructure} and doesn't avoid the trilemma from subsection~\sref{s:TPS-soulas-trilemma}.
This section discusses the question whether this method can really construct an emergent structure, and whether it is useful in deriving predictions, retrodictions, or explanations.

\subsection{Soulas et al.'s method and the ``emergence'' of electron's mass}
\label{s:soulas-emergence-inconsequential-casimir}

Soulas et al.'s TPS is not uniquely determined by $\oper{H}$ and $\ket{\psi}$ only, but also by many other constants $s_{R,k}$ encoded in the function $s$ from equation~\eqref{eq:sRi}, correlated to fix a finite number of parameters. But they are purely relational constraints for the TPS, they are invariant to a change of basis of $\hilbert$, because they are scalars obtained by combining only $\ket{\psi}$, $\oper{H}$, and the candidate TPS. Moreover, Soulas et al. combine them all in a single condition, the invariant property $P_L$ in their article, that can be stated very simply as
\begin{equation}
\label{eq:unified-invariant-property}
P_L=\texttt{True},
\end{equation}
or as ``the von Neumann entropies equal the values of $s_{R,k}$''. This may remind us of Feynman's unified equation of the entire physics, ``unwordliness=0'' \citep[ch. 25 page 10]{Feynman1964LecturesOnPhysicsVolII}.

If choosing many invariants, arbitrarily or conveniently to get an already decided TPS, is not to be understood as emergence, maybe the construction done by Soulas et al. doesn't really take place within \HSF, but in a modification of \HSF.
Using the values $s_{R,k}$ as an additional input modifies the original \HSF thesis into
\setdefCustomtag{HSF$(s_{R,k})$}
\begin{defCustom}[\HSF $+$ von Neumann entropies]
\label{def:HSFs}
Everything about the world is encoded in
\begin{equation}
\label{eq:HSFs}
\(\hilbert,\oper{H},\ket{\psi}{\color{red},s_{R,k}}\),
\end{equation}
where $s_{R,k}$ are as in equation~\eqref{eq:sRi}.
\end{defCustom}

\ref{def:HSFs} is equivalent to the modification of \HSF that includes the TPS, \ref{def:HSF_TPS}.

We can make an analogy between Soulas et al.'s idea to specify the TPS by specifying its entanglement invariants and the classification of elementary particles.

The orbits of the Lorentz group's action on the four-dimensional Minkowski space are lightcones and hyperboloids. This is similar to the fact that the orbits of the action of the rotation group are concentric spheres classified by the radius. The difference comes from the fact that the Minkowski metric is indefinite. This gives a nice representation on the four-dimensional space of four-momenta, and a classification of particles obtained by classifying the orbits. Positive mass particles correspond to hyperboloids in the future lightcone, zero mass particles to lightcones, and the particle's mass itself is an invariant classifying these orbits.

Similarly, the action of the full Poincar\'e group on the projective Hilbert space, action represented by unitary and anti-unitary operators, can be decomposed in irreducible representations. These are infinite-dimensional, and are classified as well by the mass of the particle, but also by its spin \citep{Wigner1931GruppentheorieUndIhreAnwendungAufDieQuantenMechanikDerAtomspektren,Wigner1939OnUnitaryRepresentationsOfTheInhomogeneousLorentzGroup,Bargmann1954OnUnitaryRayRepresentationsOfContinuousGroups}. The mass and the spin are the Casimir invariants of the representations. As a side note, this is how to break the symmetry of the Hilbert space in quantum field theory, to make space appear as needed by \qtref{QT:OBS} \citep{Stoica2024WhyTheWavefunctionAlreadyIsAnObjectOnSpace}.

The same insight is used in the standard model of elementary particles, where the Poincar\'e group acts on spacetime while other groups act on the fibers of a bundle, giving the gauge-theoretical description of interactions and add more invariants, \eg charges, to classify the elementary particles \citep{Weinberg2005QuantumTheoryOfFields}.

Now, recall that the construction from Soulas et al. is based on fixing the invariants classifying the orbits (\ie of the TPSs). Fixing the von Neumann entropies to the values $s_{R,k}$ selects a unique TPS.

But, if Soulas et al.'s construction is justified, wouldn't it be equally justified to claim that the electron's mass emerges uniquely, by fixing by hand the invariants of the irreducible projective representations of the Poincar\'e group?
Of course, nobody claims this in the case of the electron, because fixing the Casimir invariants amounts to fixing the electron's mass and spin by hand.
But how is this different when we classify the TPSs?
\citep{SoulasFranzmannDiBiagio2025OnTheEmergenceOfPreferredStructuresInQuantumTheoryV2} doesn't discuss this.

Then, what is the difference between Soulas et al.'s construction and the direct specification of the electron's mass?
In both cases, one selects an orbit by fixing invariants.
The invariants are exactly the data that define the structure.
Therefore the ``emergence'' is merely orbit selection.

A difference is that, in the case of the electron, it is immediately evident that specifying the invariant is the same as specifying the mass. This is why nobody claims that the electron's mass emerges this way, because it would evidently amount to specifying it directly. Programs that attempt to show that the electron's mass emerges, try to do this from something else, for example from the total energy of the electromagnetic field of the electron, or by integrating the Einstein tensor, in the case of electromagnetic fields in general relativity.

\begin{question}
\label{s:TPS-no-constraints-really}
Is there a sense in which we can call the TPS from \citep{SoulasFranzmannDiBiagio2025OnTheEmergenceOfPreferredStructuresInQuantumTheoryV2} emergent, which at the same time doesn't imply that fixing the mass of the electron by hand is also emergence?
Does their construction constrain the TPS based on $\oper{H}$ and $\ket{\psi}$, or it just specifies any TPS we could wish for?
\end{question}

Using the Soulas et al. construction, we can clearly specify any TPS we may wish, without any restriction. For this, we just start with the target TPS, collect the von Neumann entropies of the subsystems of $\ket{\psi}$, and we get the needed $s_{R,k}$.

But, in my opinion, when we say that a TPS $\mc{T}$ emerges from $\oper{H}$ and $\ket{\psi}$, there should at least be a limitation of the possible resulting $\mc{T}$, and this limitation should be different if $\oper{H}$ or $\ket{\psi}$ are different. Otherwise in what sense can we say that $\mc{T}$ emerges from $\oper{H}$ and $\ket{\psi}$?

\subsection{What should we expect to gain from an emergent structure?}
\label{s:soulas-emergence-inconsequential-gain}

If someone would come up with a way to compute the electron's mass from other quantities, we can test this empirically. We can compare the experimental values with the ones predicted by the theory, and decide whether to keep or to reject the theory. 

I think an emergent structure must do some job, it must unify something, make new prediction, explain previously unexplained relations, explain some phenomena reductively in terms of other phenomena. Isn't this the whole purpose of \HSF? 

For example, when Newton unified the law of gravity describing apples falling with Kepler's laws, this brought something new, a better explanation.
When Einstein unified space and time, his theory also unified momentum and energy in a four-momentum, and later it gave us the classification of elementary particles. When Maxwell's equations were written in a Poincar\'e invariant form, the 3D electric vector field and magnetic pseudo-vector field were unified in a 4D tensor.
When electrodynamics was formulated as a gauge theory, this could be used for other interactions.
Every time, a new explanation was born. This explanatory power could be extended to help us make other discoveries, and to predict the values of some properties or constants in terms of other properties or constants.
This is what I consider emergence to be, if you ask me. But fixing the electron's mass by fiat to any value that comes from the measurements contributes nothing, and I don't consider it emergence.

This aspect related to Question~\ref{s:TPS-no-constraints-really} can be formulated as
\begin{question}
\label{q:TPS-soulas-inconsequential}
Does the construction from \citep{SoulasFranzmannDiBiagio2025OnTheEmergenceOfPreferredStructuresInQuantumTheoryV2} lead to the emergence of new predictions, retrodictions, or explanations of some phenomena?
\end{question}

It is natural to expect that the predictions or explanations resulting from \HSF to be exactly its central claims, the emergence of preferred structures that we notice in the world.
\HSF shouldn't simply try to fix the TPS we already noticed in the world, but to exclude other TPSs.
It may, for example, give an alternative explanation of the fermionic and bosonic statistics.

\subsection{Example of emergent structure: Cotler et al.'s Hamiltonian}
\label{s:cotler-emergence-example}

I will give an example in which it makes sense to say that there is emergence, including in the predictive or explanatory senses:

The construction from \citep{CotlerEtAl2019LocalityFromSpectrum} severely restricts both the Hamiltonian and the TPS (even though the TPS is not unique as it was widely believed).
It restricts $\oper{H}$ because almost all possible $\oper{H}$ don't have a $k$-local form. It restricts the TPS because, once $\oper{H}$ found, it doesn't have a $k$-local form in all possible TPSs.
It is conceivable that these restrictions result in specific empirical predictions, at least in principle.

\subsection{What can we learn from Soulas et al.'s TPS?}
\label{s:soulas-TPS-lesson}

Because \ref{def:HSFs} works for any TPS, it doesn't explain more than \ref{def:HSF_TPS}, except for a more complicated and indirect specification of the TPS. In other words, it does not challenge the independence of \qtref{QT:TPS} from \qtref{QT:HSF}.

Of course, not every contribution to physics has to make predictions or to explain something. The construction from \citep{SoulasFranzmannDiBiagio2025OnTheEmergenceOfPreferredStructuresInQuantumTheoryV2} seems intended to be a counterexample to \citep{Stoica2022SpaceThePreferredBasisCannotUniquelyEmergeFromTheQuantumStructure}, and if it could, this would be more than sufficient. 
My point is not that a counterexample \emph{should} make predictions, but that \emph{if} it really proves an emergent phenomenon, \emph{then} it is natural to expect that this emergence comes with predictions or explanations.
So maybe my Question~\ref{q:TPS-soulas-inconsequential} is just a comment on the meaning of the notion of ``emergent preferred TPS'' or ``emergent preferred structure'' in general.

However, we have seen that Soulas et al.'s construction, once the invariants $s_{R,k}$ fixed, does make predictions, but not in the form of an instantaneous relation between $\oper{H}$, $\ket{\psi}$, and the TPS, which is what it may be natural to expect from a proof of emergence. Its predictions appear in another place: when we apply it to the dynamics of quantum theory. And these predictions allow the falsification of the claim that the TPS emerges, but the resulting TPS is not refuted, since this one can be any TPS.

I personally wouldn't consider fixing the invariants by hand to select a unique TPS to be ``unique emergence'' or even emergence at all.
But maybe that's just my opinion.
Nevertheless, we've seen that the no-go results from \citep{Stoica2022SpaceThePreferredBasisCannotUniquelyEmergeFromTheQuantumStructure}, in particular \thmStoica, cover any such inclusive but inconsequential notion of emergence as well.

\section{On the ``relational philosophy'' of Soulas et al.}
\label{s:soulas-relational-philosophy}

A major claim in Soulas et al.'s criticism of \citep{Stoica2022SpaceThePreferredBasisCannotUniquelyEmergeFromTheQuantumStructure} comes from their perception that the latter doesn't abide by ``relational philosophy'', misattributing it an ``absolute perspective''.

It is true that \thmStoica refutes Soulas et al.'s relational philosophy as they stated it, but it does so because it contradicts empirical evidence, and not because of a hypothetical adherence to an absolute perspective.
I will explain more about this in this section.
We will also see that \citep{Stoica2022SpaceThePreferredBasisCannotUniquelyEmergeFromTheQuantumStructure}, and not Soulas et al.'s construction, really adopts the correct relational position resulting from the symmetry group of quantum theory.

\subsection{Soulas et al.'s ``relational philosophy'' \vs physical evidence}
\label{s:soulas-relational-philosophy-refutes-itself}

Can a no-go result say something about a philosophical position?

\citep[section~\S3]{SoulasFranzmannDiBiagio2025OnTheEmergenceOfPreferredStructuresInQuantumTheoryV2} motivate their position by invoking a ``relational philosophy'':
\begin{quote}
We argue that the relevant notion of uniqueness in physics is relational, meaning that a preferred emergent structure should be unique up to a global unitary equivalence on the whole set of structures involved. 
Indeed, \emph{two sets of structures in a Hilbert space related by a global unitary will yield exactly the same theoretical predictions}, hence they depict one and the same physical theory.
\end{quote}

Soulas et al. consider that HSF as stated in \HSF abides by \citeauthor{SoulasFranzmannDiBiagio2025OnTheEmergenceOfPreferredStructuresInQuantumTheoryV2}'s relational philosophy.
The triple $\(\hilbert,\oper{U}\oper{H}\oper{U}^\dagger,\oper{U}\ket{\psi}\)$ depicts one and the same physical reality as the triple $\(\hilbert,\oper{H},\ket{\psi}\)$.
And, since any structure $\mc{S}$ derived uniquely by invariant methods only from $\oper{H}$ and $\ket{\psi}$, in particular TPS $\mc{T}$, should be understood to be unique up to global unitaries, one gets that
$\(\hilbert,\oper{U}\oper{H}\oper{U}^\dagger,\oper{U}\ket{\psi},\oper{U}\cdot\mc{S}\)$ depicts one and the same physical reality as $\(\hilbert,\oper{H},\ket{\psi},\mc{S}\)$.
But we've seen that this implies immediately that this should be true even if we choose $\oper{U}=e^{-\frac{i}{\hbar}\oper{H}t}$.
So relational philosophy implies that the world now is exactly like the world at any other time, no matter how distant in the past or in the future. More details can be found in \citep{Stoica2026NoChangeInHilbertSpaceFundamentalism}.

And we've seen in detail in section~\sref{s:TPS-soulas-nogo}, on the very TPS proposed by Soulas et al., that in \HSF we get the exact same description of the world at all times, regardless of the route we adopt trying to fix the emergent structure.

Simply put, Soulas et al.'s relational philosophy can't distinguish the changes in time.
Moreover, it can't distinguish the actual state of the world $\ket{\psi(t)}$ from alternative possibilities $\oper{U}\ket{\psi(t)}$, where $\oper{H}=\oper{U}\oper{H}\oper{U}^\dagger$.
This means that nature herself doesn't abide by Soulas et al.'s relational philosophy.
But does it contradict relational philosophy in general?

\subsection{The correct relational perspective is relative to the fundamental structures}
\label{s:soulas-relational-philosophy-relative}

Relationalism manifests differently in different theories, depending on their fundamental structures. Empirical evidence decides which is the right one.

Let us start with a simpler case.
What is the difference between affine geometry and Euclidean geometry? Affine geometry is about lines and their incidence relations, \ie whether they intersect, coincide, or are parallel. Affine geometry doesn't know about angles or distances, or even about similarity.
For this reason, its symmetry group consist of general linear transformations and translations. In plane, the affine group has $6$ dimensions, while the Euclidean group of isometries has $3$ dimensions.
Affine symmetry transformations preserve the lines and their intersection points and parallelism, and also collinearity relations between points. Euclidean symmetry transformations preserve all of these and angles and distances, \ie the metric.

For this reason, from the point of view of Euclidean geometry, a general affine transformation deforms a configuration of geometric shapes. Any triangle can be mapped by an affine transformation into any other triangle, even if the two triangles are not congruent according to Euclidean geometry.
Isometries are affine transformations, but most affine transformations are not isometries.

The ``correct relational philosophy'' consists of applying the appropriate symmetry group. We apply the affine group to affine geometry, and the Euclidean group to Euclidean geometry.
The invariant properties of affine geometry are also invariant properties of Euclidean geometry, but not the other way around.
So to move from affine geometry to Euclidean geometry we need to reduce the symmetry group, for example by requiring the symmetry transformations to preserve the metric. Alternatively, we can add more postulates, the congruence axioms.

If we compare these two geometries with the empirical evidence, to the extent that our physical space can be approximated by one of them, this is the Euclidean geometry.
Affine geometry would proclaim the triangle (Sirius, Polaris, Betelgeuse) as equivalent with the Bermuda triangle (Florida, Bermuda, Puerto Rico) and with the printed symbol $\triangle$, while Euclidean geometry sees them as very different. And not because Euclid adopted the wrong philosophy or an ``absolute perspective''.

\begin{remark}
\label{rem:affine-to-Euclid}
One may imagine that we can ``derive'' the Euclidean metric within affine geometry by fixing some invariants. For example, we can impose the condition that the six distances between the Moon, the Earth, the Sun, and Jupiter are equal to six predefined values $s_{j,k}$.
Once we fix these values, provided that they satisfy strictly certain inequalities, we can indeed reconstruct an Euclidean geometry of space. But for different choices of the values $s_{j,k}$, we will get different metrics, different Euclidean geometries associated to the same space. And these distances change in time, which comes with additional problems: should we make $s_{j,k}$ time-dependent in the exact way needed to compensate for these changes, or should we assume an absolute moment of time when the Euclidean metric ``emerged uniquely'' from affine geometry, and fix that metric once and for all?
\end{remark}

This is similar to what we've seen happens when we compare \HSF with quantum theory.
\HSF may proclaim that it can derive \qtref{QT:TPS} and \qtref{QT:OBS} from \qtref{QT:HSF}, but as long as the ``\qtref{QT:HSF} geometry'' and the ``\qtref{QT:HSF}$+$\ref{QT:TPS}$+$\ref{QT:OBS} geometry'' have different symmetry groups, they are not the same.
Just like we can't derive the congruence axioms from affine geometry only.
And just like affine geometry sees all triangles as the same triangle, \HSF can't distinguish the state of the world at different times. Quantum theory distinguishes them, and therefore nature voted for quantum theory.

The symmetry group of quantum theory and of the world it describes is much smaller than the unitary group of the Hilbert space. It includes the group of spacetime isometries (if we ignore gravity) and the gauge groups of the standard model, but not the full unitary group.
Unitary transformations outside of these symmetries take states into physically distinct states. To be applied correctly, ``relational philosophy'' should be applied to this group, not to the full unitary group.
And this is why \qtref{QT:TPS} and \qtref{QT:OBS} are essential, and can't be eliminated, just like we can't eliminate the congruence axioms in Euclidean geometry.

Does this contradict relational philosophy? Only the version that claims that the symmetry group from physics is the full unitary group.
The correct relational position is relative to the fundamental structures of \qtref{QT:HSF}$+$\ref{QT:TPS}$+$\ref{QT:OBS}.

\subsection{Embracing the ambiguity: many-many-worlds}
\label{s:soulas-relational-philosophy-embrace-ambiguity}

We've seen that the relational philosophy of Soulas et al. was applied incorrectly, and \HSF is quite ambiguous, being unable to distinguish changes in time or alternative states of the world.

But what if we embrace this ambiguity? What if it's perfectly fine, or even better, to describe the entire history and even alternative histories with the same triple $\(\hilbert,\oper{H},\ket{\psi}\)$ and whatever structures emerge uniquely from it?

What if all worlds encoded in the same state vector are equally real, and so there are ``parallel many worlds'' within the same state, independently of those proposed by Everett?

This ``biting the bullet'' attitude was already discussed in \citep[section~\S$7.2$]{Stoica2022SpaceThePreferredBasisCannotUniquelyEmergeFromTheQuantumStructure}. The ambiguity of an emergent pointer basis and the idea of parallel many worlds were partially rediscovered later by \citep{AdilAlbrecht2024SearchForClassicalSubsystemsInQuantumWorlds}, with the help of the computer.

However, even the overly inclusive version of \HSF, the one that accepts that the triple $\(\hilbert,\oper{H},\ket{\psi}\)$ describes at once many different worlds, makes empirical predictions: it predicts different correlations between events. In particular, it implies that there are no correlations between the properties of external objects and the records of these properties from our brains \citep{Stoica2023AreObserversReducibleToStructures}, directly contradicting the very fact that we can know the values of properties of the external world.

\section{``Emergent World'' \vs ``Emergent Law'' programs}
\label{s:world-v-law}

I think that a source of problems for the \HSF program is its aim to be a theory of everything about the world. This invites the temptation to use some results beyond their scope to build entire theories of emergence.
But since, as we have seen already in subsection~\sref{s:intro:HSF-no}, \HSF is oblivious to temporal changes, a good move seems to be to reduce the claim that \HSF describes the world to the weaker claim that \HSF can at least deal with the unchanging things. Laws don't change, so maybe \HSF is only about laws.

\subsection{Example of a result within ``HSF about laws''}
\label{s:world-v-law-cotleretal}

\citep{CotlerEtAl2019LocalityFromSpectrum}'s construction was believed to give us for free, from the Hamiltonian only, a preferred TPS. Their TPS was then used to determine a network of points forming the 3D space \citep{CarrollSingh2019MadDogEverettianism,Carroll2021RealityAsAVectorInHilbertSpace}.
The construction from \citep{CotlerEtAl2019LocalityFromSpectrum} is significant, but it can't be used in this way, because it's not unique in the needed sense (see Remark~\ref{rem:cotleretal-TPS}).

\citep{CotlerEtAl2019LocalityFromSpectrum} define a notion of equivalence of TPS w.r.t. the Hamiltonian, but this is not the same as uniqueness, not even up to physically unobservable differences. 
Two distinct TPSs $\mc{T}$ and $\mc{T'}$ that are globally equivalent would exhibit physically distinct entanglement entropies for the same state vector $\ket{\psi}$. I pointed this out in 2021, in reply to an email received from Daniel Ranard, coauthor of \citep{CotlerEtAl2019LocalityFromSpectrum}, after I uploaded the preprint of \citep{Stoica2022SpaceThePreferredBasisCannotUniquelyEmergeFromTheQuantumStructure}.

A few years later, on July 3rd 2025, Daniel emailed Antoine Soulas about his first preprint of another paper \citep{Soulas2025DisentanglingTensorProductStructures}, including me to the email. In my first reply to Daniel's email I wrote on July 3rd
\begin{quote}
can ``a unique TPS'' as in this definition [see \sref{s:TPS-basics}] be the same as ``a unique TPS up to an equivalence'' as in the definition you proposed to use [\emph{Equivalence of TPS} from \citep{CotlerEtAl2019LocalityFromSpectrum}]?
I think we all agree that they are not the same.
\end{quote}

Daniel agreed that ``up to an equivalence'' and ``unique'' are different, so the first should not be used as meaning the second, and we discussed what options we have to clarify this.

On July 7th, a few days after my reply that included the clarification and Daniel agreed, Antoine sent his first reply to Daniel's email, writing that he agreed too and they missed this distinction. If there was any debate about this between Daniel and myself, we resolved it long before its resolution was announced by \citep{SoulasFranzmannDiBiagio2025OnTheEmergenceOfPreferredStructuresInQuantumTheoryV2}.

But the more important point I want to make here is that the construction from \citep{CotlerEtAl2019LocalityFromSpectrum}, while believed to obtain a unique TPS, \emph{is more interesting for what it does than for what was misunderstood to achieve}:
\begin{remark}
\label{rem:locality-from-the-spectrum}
I think \citep{CotlerEtAl2019LocalityFromSpectrum}'s results should \emph{not} be understood as being about the emergence of a unique TPS (and clearly is not unique), but about the existence of \emph{a unique local form of the Hamiltonian}. The TPS is incidental, its only role is to exhibit the local form of the Hamiltonian, when expressed as linear combinations of tensor products between local operators on the factors of the TPS.
This exhibits the interactions between subsystems.
\end{remark}

And indeed, \citep{CotlerEtAl2019LocalityFromSpectrum} discusses extensively about \emph{local descriptions} of the Hamiltonian, and I think the title itself, ``Locality from the spectrum'', suggests that it should be understood as about the form of the Hamiltonian, rather than about the TPS.
It makes no sense to say of the TPS that it is local, but it makes perfect sense to say this about the Hamiltonian.
So the result from \citep{CotlerEtAl2019LocalityFromSpectrum} is about the law, about decoding the interactions encoded in the Hamiltonian.

\subsection{``\HSF about ontology'' \vs ``\HSF about law''}
\label{s:world-v-law-ontology-vs-law}

In classical mechanics, the Hamiltonian function $\ms{H}(\q,\p)$ specifies the theory, but it has nothing to say about particular structural configurations, which depend of the initial conditions. Moreover, canonical transformations preserve the form of Hamilton's equations, but not necessarily the form of the Hamiltonian $\ms{H}(\q,\p)$ as a function of $\p$ and $\q$, in particular of the subsystems (particles or fields).
But there are canonical transformations that also preserve the form of the Hamiltonian $\ms{H}(\q,\p)$. When used as active transformations, they ``rewrite'' the history of the system. When used as passive transformations, they change the generalized coordinates, but the positions and momenta used to describe the same system look as if they describe an alternative world, in which the particles are arranged completely differently in space and have different momenta.
Assuming for a moment that there is a classical description of the phases of water for example, a canonical transformation would be able to describe water in the way it describes ice or vapor, and the other ways around.
And a pattern of water waves or ice crystals would look like another such pattern or like a cloud.
But nobody worries about this in classical physics, because it is assumed that the classical particles and classical fields in space are the true ontology of the classical world.

The quantum analog of the classical Hamiltonian function is the Hamiltonian operator $\oper{H}$. The quantum analog of classical canonical transformations are unitary transformations. The dependence of $\ms{H}$ of the classical observables $\p$ and $\q$ becomes, after quantization, a dependence of $\oper{H}$ of the quantum observables $\widehat{\p}$ and $\widehat{\q}$.

Hilbert space fundamentalism claims that we can get rid of these observables, and keep only the Hamiltonian and a unit vector \emph{as the ontology of the world}.

The \HSF program is structural, but also ontological \citep{Carroll2021RealityAsAVectorInHilbertSpace}:
\begin{quote}
Here I want to argue for the plausibility of an extreme position among these alternatives, that the fundamental ontology of the world is completely and exactly represented by a vector in an abstract Hilbert space, evolving in time according to unitary Schr\"odinger dynamics. Everything else, from particles and fields to space itself, is rightly thought of as emergent from that austere set of ingredients. This approach has been called ``Mad-Dog Everettianism'' \citep{CarrollSingh2019MadDogEverettianism} although ``Hilbert Space Fundamentalism'' would be equally accurate.
\end{quote}
The quote from \citep{CarrollSingh2019MadDogEverettianism} reproduced in the \nameref{s:intro} makes similar claims: ``The simplest quantum ontology is...''. Everything about the world should be completely ``decodeable'' from just the triple $\(\hilbert,\oper{H},\ket{\psi(t)}\)$.
And indeed, \HSF tries to recover a description of the world, for example of the 3D space and its curved metric, or even \citep{Carroll2021RealityAsAVectorInHilbertSpace}:
\begin{quote}
The structure of our observed world, including space and fields living within it, should arise as a higher-level emergent description.
\end{quote}

We've seen that this doesn't work, just like a classical triple consisting of a phase space, a Hamiltonian function $\ms{H}$, and a state $(\q,\p)$ can describe infinitely many alternative worlds under a canonical transformation. The theory would not be able to distinguish different patterns of water waves or ice crystals or cloud shapes, or, in fact, even entirely different worlds. And to fix this ambiguity, we need to assume that $(\q,\p)$ has attached a unique physical meaning in terms of particles and their positions and momenta. This is the classical analog of \qtref{QT:OBS} from the \nameref{s:intro}.

\begin{remark}
\label{rem:main-lesson}
The main lesson from \citep{Stoica2022SpaceThePreferredBasisCannotUniquelyEmergeFromTheQuantumStructure} is that \HSF can't work, because the same triple $\(\hilbert,\oper{H},\ket{\psi(t)}\)$ can describe infinitely many different physical worlds.
\qtref{QT:OBS} is necessary.
\end{remark}

But can a weaker version of the \HSF program survive?
Perhaps, if it drops the claim that it describes the world, and focus only of certain relational aspects of the laws.
Maybe this version:
\setdefCustomtag{HSF-law}
\begin{defCustom}[law-oriented \HSF]
\label{def:HSF-law}
Two triples $\(\hilbert,\oper{H},\ket{\psi}\)$ and $\(\hilbert',\oper{H'},\ket{\psi'}\)$ depict one and the same physical law if and only if they are isomorphic as in equation~\eqref{eq:HSF-iso}.
\end{defCustom}

As the example about \citep{CotlerEtAl2019LocalityFromSpectrum} demonstrates, there are interesting things to learn about the law, in the form of the Hamiltonian, from its spectrum alone. But as seen in \citep{Stoica2022SpaceThePreferredBasisCannotUniquelyEmergeFromTheQuantumStructure} and in this paper, this shouldn't be used to make claims about the particular configuration of the world, let alone about the ontology.

Then, the original, strong version of \HSF can be called ``world-oriented \HSF'' or ``ontic \HSF'', and the weaker version ``law-oriented \HSF'' or ``nomic \HSF''.

\subsection{Limits of ``\HSF about law''}
\label{s:world-v-law-limits}

In a recent talk \citep{Ranard2026LocalityFromTheSpectrumTalk}, Daniel said that \HSF seems ``ugly'', because, unlike the Lagrangian of the standard model, which can fit on a T-shirt, ``the spectrum is a \emph{huge} amount of data'' that wouldn't fit on a T-shirt.
But who knows? Maybe we will find out that the spectrum of the Hamiltonian is a simple sequence like $\omega_j=\sqrt{j}$, or $\omega_j=j$, or the prime numbers.

\citep{Stoica2024EmpiricalAdequacyOfTimeOperatorCC2HamiltonianGeneratingTranslations} gives infinitely many examples of physically different theories with the same Hamiltonian
\begin{equation}
\label{eq:H-translation}
\oper{H}=-i\hbar\pdv{\tau}.
\end{equation}

The examples include any quantum world containing a clock, a free massless fermion, or a standard ideal quantum measurement, and the quantum representation \citep{Koopman1931HamiltonianSystemsAndTransformationInHilbertSpace,vonNeumann1932KoopmanMethod} of any deterministic time-reversible classical dynamical system, and any quantum world that cannot return to a past state \citep{Stoica2022ProblemOfIrreversibleChangeInQuantumMechanics}. Also the formalism that recovers time in timeless quantum gravity developed by \citet{PageWootters1983EvolutionWithoutEvolution} turns out to share this Hamiltonian. The Hamiltonians of these quantum worlds look completely different because the physical meaning of their observables and the TPSs are different, but when expressed in a particular way it's just~\eqref{eq:H-translation}.
If our world is like one of these, the spectrum of the Hamiltonian is simply
\begin{equation}
\label{eq:H-spectrum-translation}
	\includegraphics[width=0.25\textwidth]{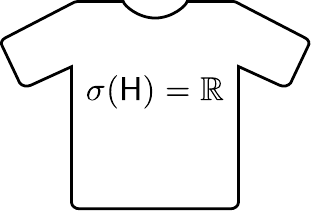}
\end{equation}

But, given how many theories with physically different laws share the Hamiltonian~\eqref{eq:H-translation}, the spectrum alone, and therefore \ref{def:HSF-law} too, is insufficient in this case.
As shown in \citep{Stoica2022ProblemOfIrreversibleChangeInQuantumMechanics}, no matter how complicated the Hamiltonian of the world is, the existence of a particle in linear motion, an ideal measurement device, or an ideal clock, smears its Hamiltonian, turning it into~\eqref{eq:H-translation}.
The interactions encoded in the relation between the Hamiltonian and the TPS are an essential part of the law, and perhaps they constitute most of it, but there are many ways to decode interactions from the Hamiltonian alone. Therefore, it seems that even \ref{def:HSF-law} has a very limited scope.

\section{Conclusion}
\label{s:conclusion}

A discussion of the emergence of preferred structures in quantum theory has to disentangle many aspects that make understanding difficult, despite the existence of simple arguments.
But Soulas et al.'s construction offered the perfect example that illustrates the obstructions for the emergence of preferred structures, as well as common hidden assumptions behind this idea.

We examined the TPS construction from \citep{SoulasFranzmannDiBiagio2025OnTheEmergenceOfPreferredStructuresInQuantumTheoryV2}. 
Despite the many choices, once these made, the construction method is invariant at every step, if all that matters is the triple $(\hilbert,\oper{H},\ket{\psi})$, rather than a dynamical system $(\hilbert,\oper{H},\ket{\psi(t)})$.
The construction relies on many arbitrary choices for the invariants~$s_{R,k}$.
An easy to spot issue is that, by this, the construction can accommodate any TPS, so it doesn't give a reason why a TPS is better than any other TPS.

But the central question is how does this work as a quantum theory. We've seen that once we take dynamics into account, the invariance is challenged, and trying to restore it conflicts with observable evidence, as shown by \thmStoica from \citep{Stoica2022SpaceThePreferredBasisCannotUniquelyEmergeFromTheQuantumStructure}.
For this reason, the TPS construction from \citep{SoulasFranzmannDiBiagio2025OnTheEmergenceOfPreferredStructuresInQuantumTheoryV2}, even if it was intended as a counterexample to \citep{Stoica2022SpaceThePreferredBasisCannotUniquelyEmergeFromTheQuantumStructure}, turns out to be a perfect illustration of those no-go results.

We also had the opportunity to develop the correct meaning of relational perspective and to pick up what survives from \HSF, proposing a weaker version of \HSF that has chances to succeed by focusing on limited aspects of the physical law rather than aiming to be a theory of everything or provide a complete description of reality.

\addcontentsline{toc}{section}{\refname}


\end{document}